\title{Radon-based Image Reconstruction for MPI using a continuously rotating FFL}
\author{Stephanie Blanke\footnote{stephanie.blanke@uni-hamburg.de}~ and Christina Brandt}
\date{}
\DeclareMathOperator{\R}{\mathbb R}
\DeclareMathOperator{\dmathInt}{\, d}
\DeclareMathOperator{\BV}{BV}
\DeclareMathOperator{\TV}{TV}
\DeclareMathOperator{\supp}{supp}
\newcommand{\brackets}[1]{\left( #1 \right)}
\newcommand{\abs}[1]{\left| #1 \right|}
\newcommand{\norm}[1]{\left\| #1 \right\|}
\newtheorem{definition}{Definition}[section]
\newtheorem{lemma}[definition]{Lemma}
\newtheorem{remark}[definition]{Remark}
\newtheorem{example}[definition]{Example}
\newtheorem{theorem}[definition]{Theorem}
\newenvironment{proof}{\textbf{Proof:}}{\hfill$\square$ \\}
\begin{document}

\maketitle

\vspace{-0.7cm}

\begin{center}
	\small
	Department of Mathematics, Universit\"at Hamburg, Germany
\end{center}

\vspace{0.05cm}

\begin{abstract}
Magnetic particle imaging is a relatively new tracer-based medical imaging technique exploiting the non-linear magnetization response of magnetic nanoparticles to changing magnetic fields. If the data are generated by using a field-free line, the sampling geometry resembles the one in computerized tomography. Indeed, for an ideal field-free line rotating only in between measurements it was shown that the signal equation can be written as a convolution with the Radon transform of the particle concentration. In this work, we regard a continuously rotating field-free line and extend the forward operator accordingly. We obtain a similar result for the relation to the Radon data but with two additive terms resulting from the additional time-dependencies in the forward model. We jointly reconstruct particle concentration and corresponding Radon data by means of total variation regularization yielding promising results for synthetic data.
\end{abstract}

\section{Introduction}
Reliable and fast medical imaging techniques are indispensable for diagnostics in clinical everyday life. One promising example is magnetic particle imaging (MPI) invented by Gleich and Weizenecker \cite{gleich2005tomographic}. MPI is a tracer-based imaging modality aiming for the reconstruction of the spatial distribution of magnetic nanoparticles injected into the patient's body. For data generation changing magnetic fields are applied to the field of view (FOV). These fields are composed of a selection field $\mathbf{H}_S$ featuring a low-field volume (LFV), and a drive field $\mathbf{H}_D$ moving the LFV through the FOV. The particles' corresponding non-linear magnetization response is measured in terms of induced voltages via receive coils. Note that only particles within the LFV contribute to the signal as all other particles stay in magnetic saturation, guaranteeing spatial information to be encoded in the data. The LFV takes different shapes depending on the specific scanner implementation. Most commonly, a field-free point (FFP) is used for spatial encoding. However, alternatively a field-free line (FFL) can be used as it was first suggested by~\cite{weizenecker2008magnetic}. For a detailed introduction to MPI, we recommend to consult~\cite{knopp2012magnetic}. For the specific case of an FFL scanner we further refer to~\cite{bringout2016field} and~\cite{erbe2014field}. \\
For the remainder, we restrict ourselves towards the FFL encoding scheme. The first FFL imaging systems were developed by~\cite{goodwill2012projection} and~\cite{bente2014electronic}. For further steps and breakthroughs in MPI history we refer to~\cite{knopp2017magnetic}. In~\cite{mattingly2020mpi}	the open-source project OS-MPI is presented. Thereby, information about a small-bore FFL imager is shared. Advantages of using an FFL scanner lie in a possible increase in sensitivity~\cite{weizenecker2008magnetic} as more particles are contributing to the signal. Further, in~\cite{kluth2018degree} it is stated that using an FFL may lead to a less ill-posed problem compared to the FFP scanner.\\
During data acquisition the FFL is rotated and translated through the FOV resulting in scanning geometries resembling those in computerized tomography (CT). In CT the intensity loss of X-rays traversing the object under investigation is measured. Having access to data for a suitable choice of different positions for radiation source and detector panel, the attenuation coefficient of the specimen can be reconstructed. The forward operator is given by the Radon transform mapping a function to the set of its line integrals. See e.g.~\cite{natterermathematics} for more information concerning CT. Since only particles in close vicinity to the FFL contribute to the induced voltage, it stands to reason that the MPI signal equation is linked to the Radon transform of the particle concentration. Indeed, it was shown in~\cite{knopp2011fourier} that MPI data can be traced back to the Radon transform of the particle concentration. In~\cite{bringout2020new} a similar result was derived using their newly developed 3D model applicable to magnetic fields approximately parallel to their velocity field. Hence, for concentration reconstruction, results and methods from the well-known imaging technique CT are accessible. Because of the aforementioned analogies, the idea aroused to combine the MPI with a CT scanner, thus combining quantitative information about the tracer distribution with structural information about the tissue itself. The first hybrid MPI-CT scanner was proposed in~\cite{vogel2019magnetic}. \\
For the derivation of the relation between Radon and MPI data in~\cite{knopp2011fourier}, the FFL is assumed to be rotated in between measurements. In this work, we will use a slightly different scanning geometry, allowing the FFL to be rotated simultaneously with its translation as presented in the initial publication with respect to MPI using a field-free line~\cite{weizenecker2008magnetic}. In~\cite{bringout2020new} and~\cite{knopp2011fourier} it was proposed that the derived relation is still applicable in case that the rotation is sufficiently slow compared to the translation speed. In the remainder sections, we will investigate the supposition in some more details. \\
Throughout the article we will apply the Langevin theory to model the forward operator. This corresponds to assuming the particles to be in thermal equilibrium~\cite{knopp2012magnetic}. While being not a proper description of the particle dynamics, it is still the state of the art approach when using a model-based formulation of the signal equation. Alternatively, commonly a measurement-based approach is applied, where, before the actual data generation takes place, measurements are taken for a small probe being moved trough the FOV called calibration process~\cite{knopp2012magnetic}. However, in recent years the problem of finding more accurate modeling methods aroused further interest, cf. (\cite{kluth2019towards},~\cite{kaltenbacher2021parameter},~\cite{weizenecker2018fokker}) to name a few examples. In \cite{kaltenbacher2021parameter} e.g. the modeling task was traced back to an inverse parameter identification problem. In~\cite{kluth2018mathematical} a survey regarding mathematical modeling of the signal chain is given.\\   
Finally, inspired by~\cite{Tovey_2019} we propose an image reconstruction approach, which jointly solves for the particle concentration as well as its corresponding Radon data by means of total variation (TV) regularization. Utilizing TV-based methods has a broad range of applications and is especially suitable for piecewise constant functions. See e.g.~\cite{burger2013guide} for an introduction to TV regularization. Also in the setting of MPI TV has already been applied  (\cite{storath2016edge},~\cite{zdun2021fast},~\cite{bathke2017improved}). In~\cite{ilbey2017comparison} the system-based approach combined with TV regularization has been compared with the projection-based approach using the Radon transform  for an ideal FFL setup. \\\\
The article is organized as follows: In Section~\ref{Sec:FFL} we fix notation, introduce the MPI forward model for an FFL scanner, and review the link between MPI data and Radon transform of the particle concentration developed in~\cite{knopp2011fourier}. Section~\ref{Sec:RelMPIRadon} extends this relation to the setting of simultaneous line rotation. In Section~\ref{ImageReco} we propose a joint reconstruction of particle concentration and corresponding Radon data applying TV regularization. We close with numerical examples for synthetic data in Section~\ref{Results}. \\\\

\section{Field-free Line Magnetic Particle Imaging} \label{Sec:FFL} 
According to~\cite{knopp2012magnetic} the particle concentration $c: \, \R^3 \to \R^+_{0} := \R^+ \cup \left\lbrace 0\right\rbrace $ can be linked to the voltage signal $u_l: \, \R^+ \to \R,\; l\in\left\lbrace 1,\dots,L\right\rbrace $ induced in the $l$-th receive coil via
\begin{equation}
u_l\brackets{t} = -\mu_0 \int_{\R^3} c\brackets{\mathbf{r}} \frac{\partial}{\partial t}\overline{\mathbf{m}}\brackets{\mathbf{r}, t} \cdot \mathbf{p}_l\brackets{\mathbf{r}} \dmathInt\mathbf{r}.
\label{SignalEquation}
\end{equation}
Thereby, $\mu_0$ denotes the magnetic permeability, $\overline{\mathbf{m}}: \, \R^3 \times\R^+ \to \R^3$ the mean magnetic moment, and $\mathbf{p}_l: \, \R^3 \to \R^3$ the receive coil sensitivities. Note that we omit signal filtering. In practice, the direct feed-through of the excitation signal needs to be removed. In the following, we will assume that $c$ and $\frac{\partial}{\partial t}\overline{\mathbf{m}}\brackets{\cdot, t} \cdot \mathbf{p}_l\brackets{\cdot}$ are $L_2$ functions, i.e. square-integrable. Thus,~\eqref{SignalEquation} is well-defined. When the field-free region of the MPI implementation is a straight line and the magnetic fields are constant along lines parallel to this FFL, it is called ideal FFL scanner. The attribute \textit{ideal} is added as in practice we are confronted with field imperfections leading to deformed LFVs~(\cite{bringout2016field},~\cite{bringout2020new}). In the following, we consider an ideal FFL scanner generating data for different positions and directions by moving the FFL through the xy-plane. For convenience, we will assume the particle concentration to be contained within this plane and regard the problem as two dimensional $c:\; \R^2 \to \R^+_{0} $ (cf.~\cite{kluth2018degree} for more details). Further, let $\supp \brackets{c} \subset B_R \subset \R^2$ with $B_R$ denoting the circle of radius $R>0$ around the origin. Corresponding to~\cite{knopp2011fourier} we model the magnetic fields as
\begin{equation}
\mathbf{H}(\mathbf{r},\varphi,t) = \mathbf{H}_S(\mathbf{r},\varphi) + \mathbf{H}_D(\varphi,t) = \left( -G \; \mathbf{r}
\cdot \mathbf{e}_{\varphi}
+ A\Lambda(t)\right) \mathbf{e}_{\varphi}
\label{Eqn:MagneticField}
\end{equation}
with $\mathbf{e}_\varphi:=\brackets{-\sin\varphi,\cos\varphi}^T$. Here, $G$ is the gradient strength determining the width of the LFV, $A$ denotes the drive peak amplitude, and $\Lambda$ is a periodic excitation function usually chosen to be sinusoidal. A simple computation shows that the FFL which is orthogonal to $\mathbf{e}_\varphi$ and with displacement $ s_{t}:= \frac{A}{G} \Lambda\brackets{t}$ to the origin builds the center of the LFV
\begin{equation}
\text{FFL}\brackets{\mathbf{e_{\varphi}}, s_t}:=\left\lbrace \mathbf{r} \in \R^2 : \; \mathbf{r} \cdot \mathbf{e_{\varphi}} = s_t \right\rbrace.
\end{equation}
The corresponding geometry is depicted in Figure~\ref{Fig:Geom}. Looking at this visualization, by replacing the FFL with X-rays, similarities between the scanning process for MPI using an FFL scanner and the well-known medical imaging technique CT become obvious. Note that, as it was already mentioned in~\cite{knopp2011fourier}, when comparing with the notation in common CT literature $\varphi$ needs to be shifted by $\frac{\pi}{2}$. This is because in CT the angle is usually measured from the axis to the orthogonal vector $\mathbf{e_{\varphi}}$ instead to the X-ray itself.
\begin{figure}[htbp]
	\centering
	\includegraphics[width=0.4\linewidth]{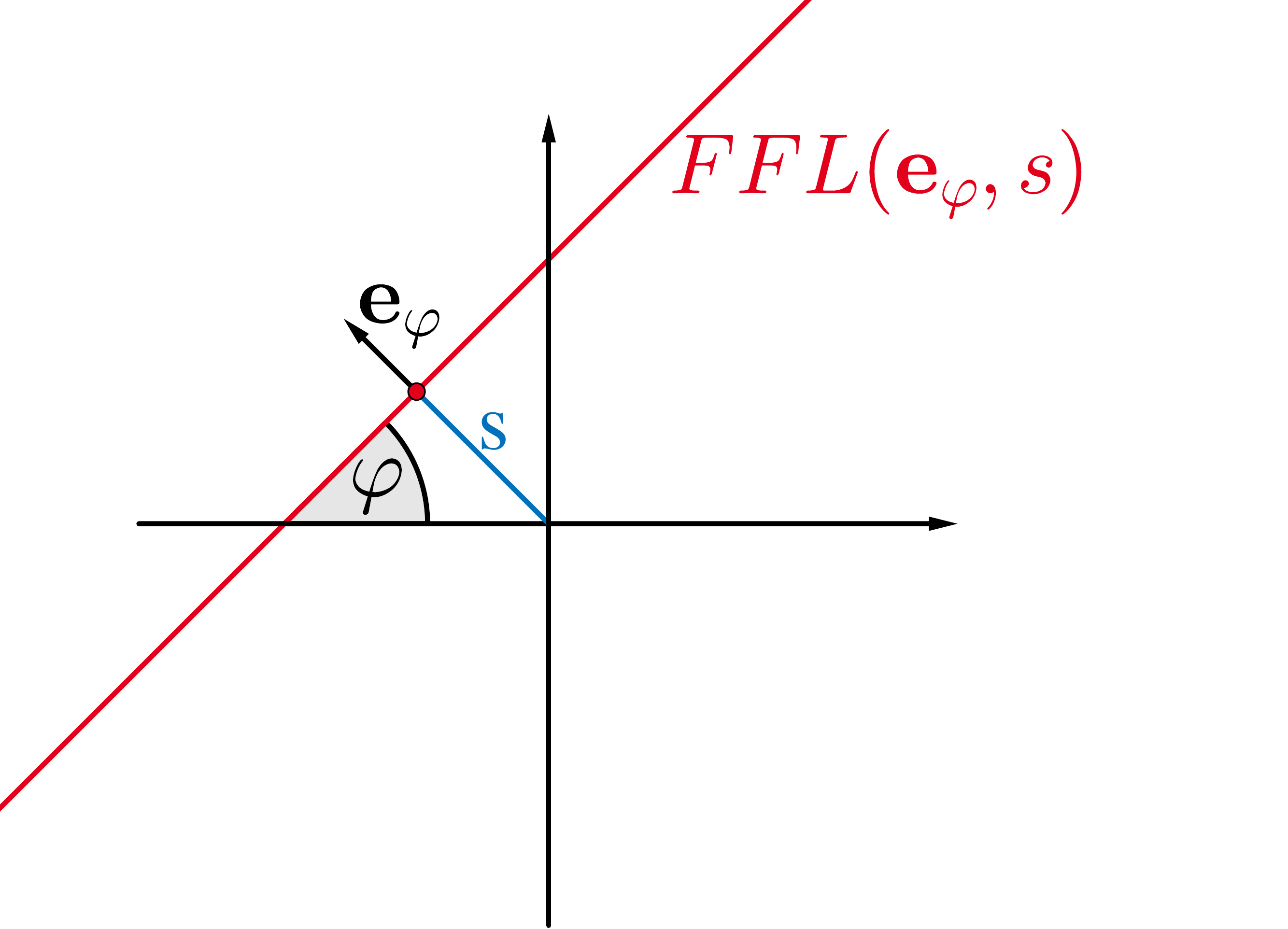}
	\caption{Visualization of the FFL orthogonal to $\mathbf{e}_\varphi$ and with displacement $s$ to the origin.}
	\label{Fig:Geom}
\end{figure}\\
Following~\cite{knopp2011fourier} the mean magnetic moment can be rewritten using the Langevin model of paramagnetism 
\begin{equation*}
\overline{\mathbf{m}}\brackets{\mathbf{r}, \varphi, t} = \overline{m}\brackets{\left\|  \mathbf{H}\brackets{\mathbf{r}, \varphi, t} \right\|} \frac{\mathbf{H}\brackets{\mathbf{r}, \varphi, t}}{\left\|  \mathbf{H}\brackets{\mathbf{r}, \varphi, t} \right\|}
\end{equation*}
with $\overline{m}\brackets{H}= m\mathcal{L}\brackets{\frac{\mu_0 m}{k_{\text{B}} T}H}$ denoting the modulus of the mean magnetic moment. Further, $m$ is the magnetic moment of a single particle, $k_{\text{B}}$ is the Boltzmann constant, $T$ the particle temperature, and $\mathcal{L}:\; \R \to \left[ -1,1\right] $ the Langevin function defined as 
\begin{eqnarray}
\mathcal{L}\brackets{\lambda} := \begin{cases}
\coth \brackets{\lambda} - \frac{1}{\lambda}&, \; \lambda \neq 0, \\
0&, \; \lambda=0.
\end{cases}\label{eq:Langevin}
\end{eqnarray}
Thus, $\overline{m}\brackets{-H}=-\overline{m}\brackets{H}$ and the signal equation~\eqref{SignalEquation} can be written as
\begin{equation}
u_l(\varphi, t) = -\mu_0  \int_{\R^2} c(\mathbf{r}) \frac{\partial}{\partial t} \overline{m} \left( 
-G \; \mathbf{r}
\cdot \mathbf{e}_\varphi
+ A\Lambda(t)\right) \mathbf{e}_\varphi \cdot \mathbf{p}_l\brackets{\mathbf{r}} \dmathInt\mathbf{r}.
\label{Eqn:ModelFFL}
\end{equation}
Different scanning geometries exist. We regard \textit{sequential line rotation}~(Figure~\ref{Fig:SeqRot}) i.e. the FFL is sequentially translated through the FOV and rotated in between measurements, and \textit{simultaneous line rotation}~(Figure~\ref{Fig:SimRot}) i.e. the FFL is rotated simultaneously to the translation (cf.~\cite{weizenecker2008magnetic},~\cite{knopp2011fourier}). In the case of simultaneous line rotation we need to replace $\varphi$ via $\varphi_t:=2\pi f_{\text{rot}}t$ with line rotation frequency $f_{\text{rot}}>0$ everywhere $\varphi$ occurs~\cite{bringout2020new}. 
\begin{figure}[htbp]
	\begin{subfigure}[b]{0.45\textwidth}
		\centering
		\includegraphics[width=1\linewidth]{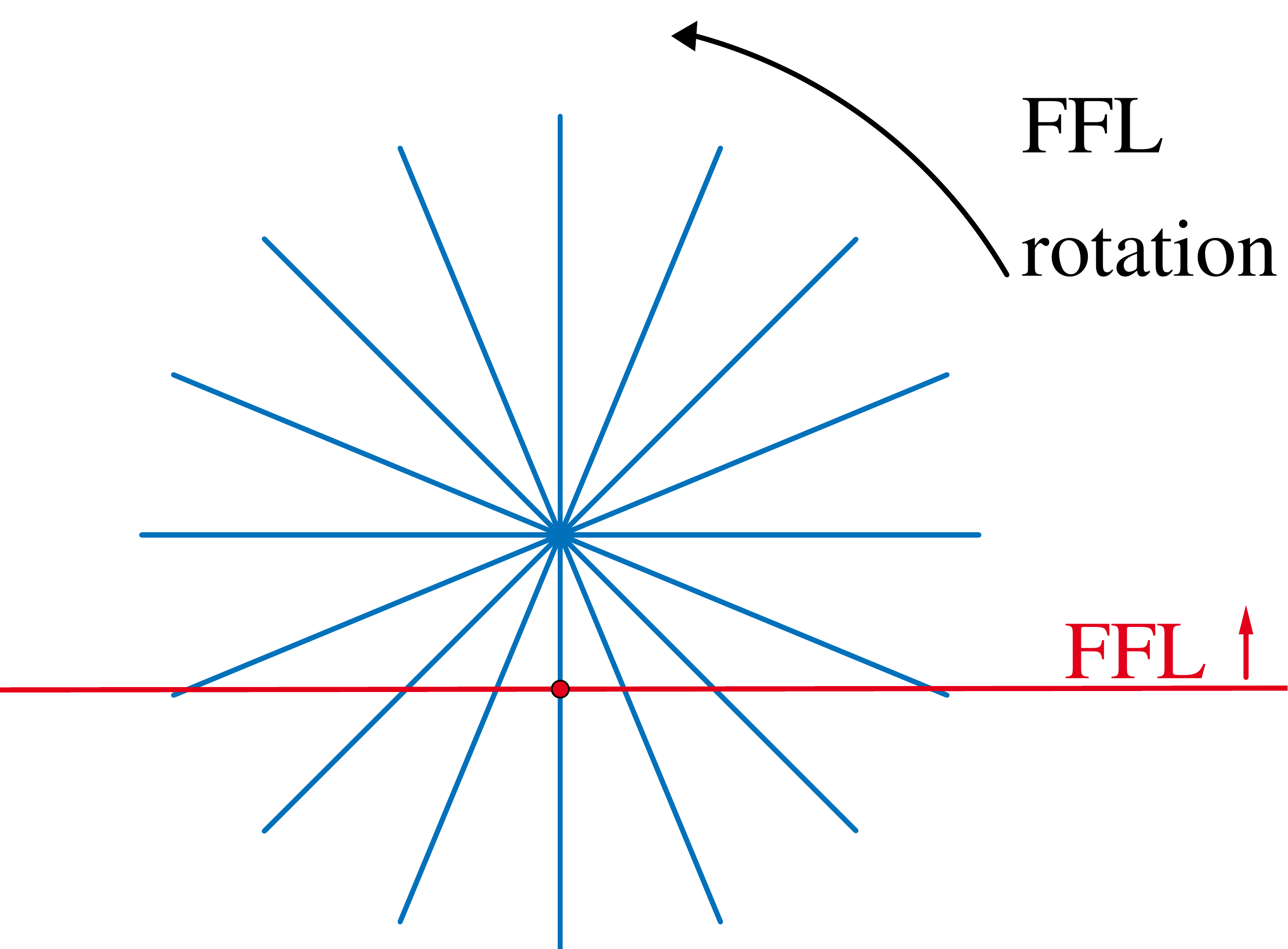}
		\subcaption{Sequential line rotation}
		\label{Fig:SeqRot}
	\end{subfigure}
	\hfill
	\begin{subfigure}[b]{0.45\textwidth}
		\centering
		\includegraphics[width=1\linewidth]{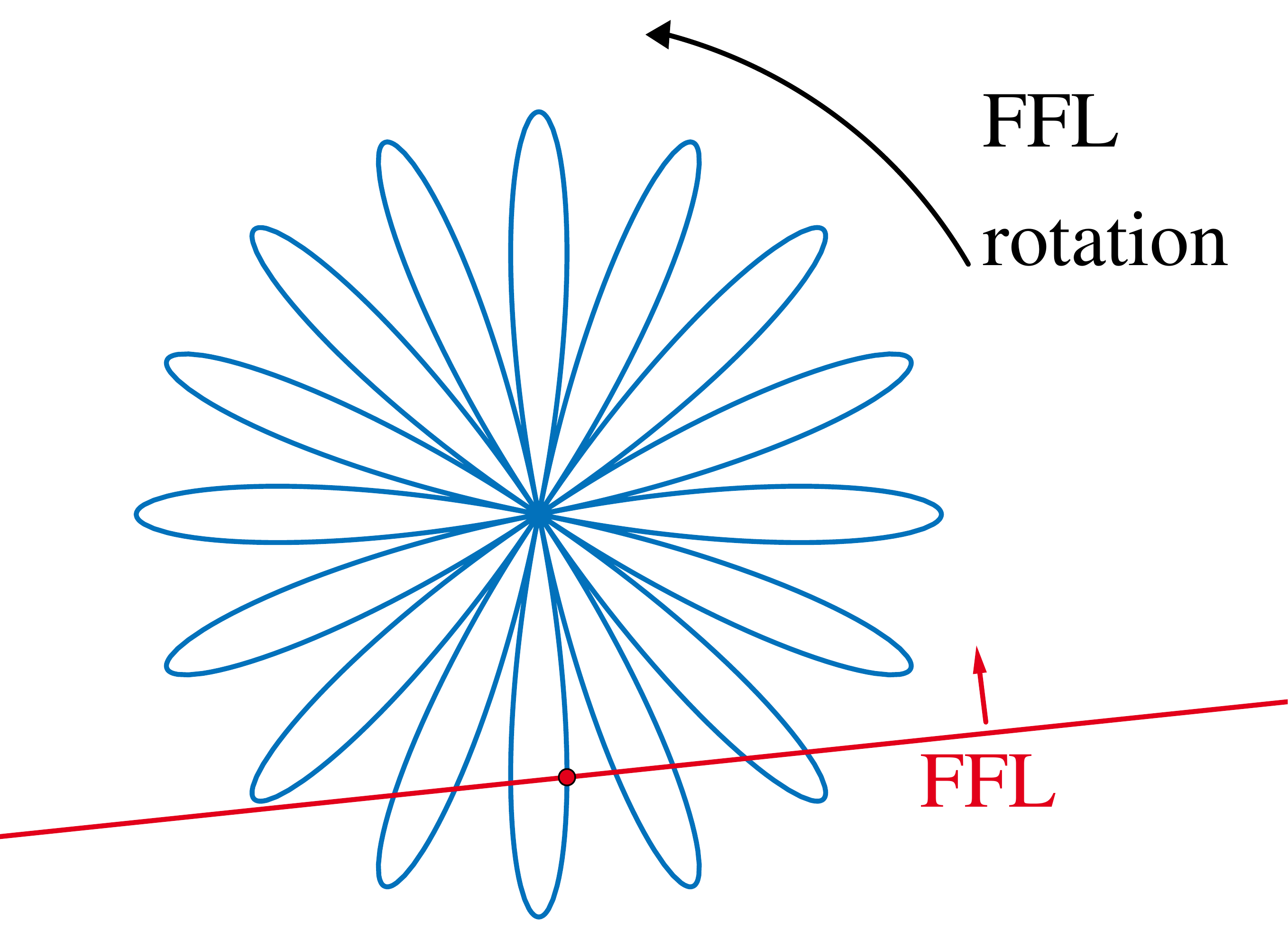}
		\subcaption{Simultaneous line rotation}
		\label{Fig:SimRot}
	\end{subfigure}
	\caption{Different scanning geometries: (a) Sequential translation and rotation of the FFL, (b) Simultaneous translation and rotation of the FFL.}
	\label{Fig:ScanningGeometry}
\end{figure} \\	
We now review the definition of the Radon transform. Let $S^1$ be the unit sphere in $\R^2$ and $Z := S^1  \times \left[ -R, R\right] $. The Radon transform $\mathcal{R}:\, L_2\brackets{B_R, \R^+_{0} } \to L_2\brackets{Z,\R^+_{0}  }$ of the particle concentration $c$ is given as
\begin{equation}
\mathcal{R}c\brackets{\mathbf{e}_\varphi, s} = \int_{B_R} c\brackets{\mathbf{r}} \delta\brackets{\mathbf{r}\cdot\mathbf{e}_\varphi-s} \dmathInt \mathbf{r}.
\label{Def:RadonTransform}
\end{equation} 
For sequential line rotation a link between MPI data an the Radon transform of the particle concentration was shown (see Theorem~\ref{Thm:FourierSlice} below).
\begin{theorem}{\cite{knopp2011fourier}} \label{Thm:FourierSlice}
	Given spatially homogeneous receive coil sensitivities, sequential line rotation, and ideal magnetic fields, the signal equation~\eqref{Eqn:ModelFFL} can be written as
	\begin{equation}
	u_l\brackets{\varphi, t} 
	= -\mu_0 A \Lambda'(t) \; \mathbf{e}_\varphi \cdot \mathbf{p}_l
	\left[  \overline{m}' \left(G \; \cdot\right) * \mathcal{R}c\brackets{\mathbf{e}_\varphi,\cdot} \right]  \left(\frac{A}{G}\Lambda\brackets{t}\right).
	\label{Eqn:FourierSlice}
	\end{equation}
\end{theorem}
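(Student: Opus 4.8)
The plan is to start from the signal equation~\eqref{Eqn:ModelFFL} and successively exploit the three hypotheses to collapse the two-dimensional spatial integral into a one-dimensional convolution against the Radon data. First I would use spatial homogeneity of the receive coil sensitivities, $\mathbf{p}_l\brackets{\mathbf{r}} \equiv \mathbf{p}_l$, to pull the scalar factor $\mathbf{e}_\varphi \cdot \mathbf{p}_l$ out of the integral, as it no longer depends on $\mathbf{r}$. Next, because we work in the sequential line rotation regime, the angle $\varphi$ is held fixed during each measurement and carries no $t$-dependence; hence the sole time dependence inside $\overline{m}\brackets{-G\,\mathbf{r}\cdot\mathbf{e}_\varphi + A\Lambda\brackets{t}}$ enters through $A\Lambda\brackets{t}$. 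Applying the chain rule to $\frac{\partial}{\partial t}$ then produces the prefactor $A\Lambda'\brackets{t}$ together with $\overline{m}'$ evaluated at the same argument, so that
\begin{equation*}
u_l\brackets{\varphi, t} = -\mu_0 A \Lambda'\brackets{t}\; \mathbf{e}_\varphi\cdot\mathbf{p}_l \int_{\R^2} c\brackets{\mathbf{r}}\, \overline{m}'\brackets{-G\,\mathbf{r}\cdot\mathbf{e}_\varphi + A\Lambda\brackets{t}} \dmathInt\mathbf{r}.
\end{equation*}

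The heart of the argument is to recognize the remaining integral as a one-dimensional convolution. I would insert the identity $1 = \int_{\R}\delta\brackets{\mathbf{r}\cdot\mathbf{e}_\varphi - s}\dmathInt s$ to introduce the scalar variable $s$, which via the sifting property of the delta distribution forces $\mathbf{r}\cdot\mathbf{e}_\varphi = s$ and hence replaces $-G\,\mathbf{r}\cdot\mathbf{e}_\varphi$ by $-Gs$ inside $\overline{m}'$. Swapping the order of integration then isolates exactly the defining integral~\eqref{Def:RadonTransform} of the Radon transform in the inner variable, giving
\begin{equation*}
\int_{\R^2} c\brackets{\mathbf{r}}\, \overline{m}'\brackets{-G\,\mathbf{r}\cdot\mathbf{e}_\varphi + A\Lambda\brackets{t}} \dmathInt\mathbf{r} = \int_{\R} \overline{m}'\brackets{A\Lambda\brackets{t} - Gs}\, \mathcal{R}c\brackets{\mathbf{e}_\varphi, s}\dmathInt s,
\end{equation*}
where the support assumption $\supp\brackets{c}\subset B_R$ restricts the $s$-integration to $\left[ -R,R\right]$. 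Finally I would rewrite the kernel using $A\Lambda\brackets{t} - Gs = G\brackets{\frac{A}{G}\Lambda\brackets{t} - s}$, so that the right-hand side is precisely $\left[ \overline{m}'\brackets{G\,\cdot} * \mathcal{R}c\brackets{\mathbf{e}_\varphi, \cdot}\right]\brackets{\frac{A}{G}\Lambda\brackets{t}}$, which is~\eqref{Eqn:FourierSlice}. Note that the orientation works out without invoking the oddness of $\overline{m}$; that symmetry was already consumed earlier in reducing the vector-valued model~\eqref{SignalEquation} to the scalar form~\eqref{Eqn:ModelFFL}.

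The hard part will be making the delta-function manipulation rigorous. The Radon transform in~\eqref{Def:RadonTransform} is written formally with a Dirac distribution, so the insertion of $\int_{\R}\delta\brackets{\mathbf{r}\cdot\mathbf{e}_\varphi - s}\dmathInt s$ and the subsequent exchange of integrals must be justified, either distributionally or — more cleanly — by the co-area formula: decomposing $\R^2$ into the lines $\left\lbrace \mathbf{r}\cdot\mathbf{e}_\varphi = s \right\rbrace$ reduces the planar integral to iterated integration over these lines, along each of which $\overline{m}'\brackets{-Gs + A\Lambda\brackets{t}}$ is constant and may be extracted. The remaining conditions — differentiability of $\overline{m}$ and $\Lambda$ for the chain rule, and square-integrability to keep every integral finite — are mild and follow from the smoothness of the Langevin function~\eqref{eq:Langevin} together with the standing $L_2$ assumptions.
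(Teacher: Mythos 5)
Your proposal is correct and matches the paper's treatment: the paper does not reproduce a proof of Theorem~\ref{Thm:FourierSlice} (it is cited from~\cite{knopp2011fourier}), but its proof of the generalization Theorem~\ref{Thm:FourierSlice_Sim} specializes, once the $\varphi_t'$ terms are dropped for sequential rotation, to exactly your argument --- chain rule yielding the prefactor $A\Lambda'(t)\,\overline{m}'$, extraction of the spatially homogeneous sensitivity factor $\mathbf{e}_\varphi\cdot\mathbf{p}_l$, and reduction of the planar integral to a convolution of $\overline{m}'\brackets{G\,\cdot}$ with the Radon data evaluated at $\frac{A}{G}\Lambda\brackets{t}$. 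The only cosmetic difference is that the paper decomposes $\R^2$ into lines parallel to the FFL via the explicit rotation $\mathbf{r}'=\mathbf{R}^{-\varphi_t}\mathbf{r}$, whereas you insert $\int_{\R}\delta\brackets{\mathbf{r}\cdot\mathbf{e}_\varphi-s}\dmathInt s$ and appeal to the co-area formula; these are the same computation.
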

For simultaneous line rotation it is reasonably proposed that~\eqref{Eqn:FourierSlice} holds approximately true as long as the FFL rotation is sufficiently slow compared to the translation~(\cite{knopp2011fourier},~\cite{bringout2020new}). The aim of this article is to investigate the setting of simultaneous line rotation in more detail. To this end, we define the MPI forward operator for the special case of ideal FFL scanner.
\begin{definition}\label{Def:MPI_FFL}
	Let $\mathcal{A}^{\text{FFL}}_l: \; L_2\brackets{B_R,\R^+_{0} } \to L_2\brackets{\R^+, \R}$ be defined as 
	\begin{equation}
	\mathcal{A}^{\text{FFL}}_l c\brackets{t} := -\mu_0  \int_{B_R} c(\mathbf{r}) \frac{\partial}{\partial t} \overline{m} \left( 
	-G \; \mathbf{r}
	\cdot \mathbf{e}_{\varphi_t}
	+ A\Lambda(t)\right) \mathbf{e}_{\varphi_t} \cdot \mathbf{p}_l\brackets{\mathbf{r}} \dmathInt\mathbf{r}.
	\label{Def:MPIForwardOp}
	\end{equation}
	Then, for ideal magnetic fields, the mapping $\mathcal{A}^{\text{FFL}}: \; L_2\brackets{B_R,\R^+_{0} } \to L_2\brackets{\R^+, \R^L}$ with $\mathcal{A}^{\text{FFL}} c\brackets{t} = \brackets{\mathcal{A}^{\text{FFL}}_lc\brackets{t}}_{l=1,\dots,L}$ is the forward operator for field-free line magnetic particle imaging.
\end{definition}
\begin{remark} \label{Rem:SimToSeq}
	Note that, by choosing $\varphi_t$ to be piecewise constant, sequential line rotation is also contained in the previous definition.
\end{remark}
For concentration reconstruction we need to solve the linear ill-posed inverse problem $$\mathcal{A}^{\text{FFL}} c = \mathbf{u} $$ with measured data $\mathbf{u} = \brackets{u_l}_{l=1,\dots,L}$ and forward operator $\mathcal{A}^{\text{FFL}}$.

\section{Relation of the MPI Forward Operator to the Radon Transform} \label{Sec:RelMPIRadon}
In the following, we examine the MPI forward operator in the ideal FFL setting for simultaneous line rotation. Therefore, let
\begin{eqnarray*}
	\mathbf{R}^{{\varphi}}=\begin{pmatrix}
		\cos{\varphi}&-\sin{\varphi} \\
		\sin{\varphi} &\phantom{-}\cos{\varphi}
	\end{pmatrix}, \quad
	\mathbf{e}_{\varphi}=\begin{pmatrix}
		-\sin{\varphi} \\
		\phantom{-}\cos{\varphi}
	\end{pmatrix}, \quad
	\mathbf{e}^\perp_{\varphi}=-\begin{pmatrix}
		\cos{\varphi} \\
		\sin{\varphi}
	\end{pmatrix}.
\end{eqnarray*}
Further, we define a weighted Radon transform
\begin{equation}
\widetilde{\mathcal{R}}c\left(\mathbf{e}_{\varphi}, s\right):=\int_{B_R} c\brackets{\mathbf{r}} \delta\brackets{\mathbf{r}\cdot\mathbf{e}_\varphi-s} \mathbf{r}
\cdot \mathbf{e}^\perp_{\varphi} \dmathInt \mathbf{r}.
\label{Def:WeightedRadon}
\end{equation}
The next theorem states a link between the MPI forward operator~\eqref{Def:MPIForwardOp}, the Radon transform~\eqref{Def:RadonTransform}, and the weighted Radon transform~\eqref{Def:WeightedRadon}. 
\begin{theorem} \label{Thm:FourierSlice_Sim}
	%\changed{\sout{Suppose that $\abs{s_t=\frac{A}{G}\Lambda\brackets{t}} \leq R $}}.
	Given spatially homogeneous receive coil sensitivities, simultaneous line rotation, and ideal magnetic fields, the MPI forward operator with respect to the l-th receive coil can be written as
	\begin{equation*}
	\mathcal{A}^{\text{FFL}}_l 
	= \mathcal{K}_{1,l} \circ \mathcal{R} +\mathcal{K}_{2,l} \circ \widetilde{\mathcal{R}} + \mathcal{K}_{3,l} \circ \mathcal{R}
	\end{equation*}
	with convolution operators $\mathcal{K}_{i,l}: \, L_2\brackets{Z, \R}	\to L_2\brackets{\R^+, \R}$ for $i=1,2,3$ and $l\in \left\lbrace 1,\dots,L\right\rbrace$
	\begin{eqnarray*}
		\mathcal{K}_{1,l} f\brackets{t}
		&=&-\mu_0 \; A \Lambda'(t) \; \mathbf{e}_{\varphi_t} \cdot \mathbf{p}_l \; \overline{m}' \left(G \; \cdot\right) * f\brackets{\mathbf{e}_{\varphi_t},\cdot}  \left(s_t\right), \\
		\mathcal{K}_{2,l} f\brackets{t}
		&=& \phantom{-} \mu_0  \; G\varphi'_t \; \mathbf{e}_{\varphi_t} \cdot \mathbf{p}_l \; \overline{m}' \left(G \; \cdot\right) * f\brackets{\mathbf{e}_{\varphi_t},\cdot}  \left(s_t\right), \\
		\mathcal{K}_{3,l} f\brackets{t}
		&=& -\mu_0 \; \varphi'_t \; \mathbf{e}^\perp_{\varphi_t}\cdot \mathbf{p}_l \; \overline{m} \left(G \; \cdot\right) 
		* f\brackets{\mathbf{e}_{\varphi_t},\cdot}  \left(s_t\right).
	\end{eqnarray*}
\end{theorem}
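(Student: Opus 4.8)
The plan is to compute the time derivative inside the integral defining $\mathcal{A}^{\text{FFL}}_l$ in~\eqref{Def:MPIForwardOp} directly, keeping track of the fact that for simultaneous line rotation the direction $\mathbf{e}_{\varphi_t}$ itself now depends on $t$. First I would establish the elementary identity $\dmath{t}\mathbf{e}_{\varphi_t} = \varphi_t'\,\mathbf{e}^\perp_{\varphi_t}$, which follows immediately from differentiating $\mathbf{e}_\varphi = \brackets{-\sin\varphi, \cos\varphi}^T$ in $\varphi$ and comparing with the given $\mathbf{e}^\perp_\varphi = -\brackets{\cos\varphi,\sin\varphi}^T$. Writing the scalar argument as $g(\mathbf{r},t) := -G\,\mathbf{r}\cdot\mathbf{e}_{\varphi_t} + A\Lambda(t)$ and using $A\Lambda(t) = G s_t$, one has $g(\mathbf{r},t) = G\brackets{s_t - \mathbf{r}\cdot\mathbf{e}_{\varphi_t}}$, which exhibits the argument in the form needed for a convolution in the displacement variable.

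Next I would apply the product and chain rules to the integrand $\overline{m}\brackets{g(\mathbf{r},t)}\,\mathbf{e}_{\varphi_t}\cdot\mathbf{p}_l$, the spatially homogeneous $\mathbf{p}_l$ being constant. Since $\partial_t g = -G\varphi_t'\,\mathbf{r}\cdot\mathbf{e}^\perp_{\varphi_t} + A\Lambda'(t)$, the derivative splits into exactly three summands: one proportional to $A\Lambda'(t)\,\overline{m}'(g)\,\mathbf{e}_{\varphi_t}\cdot\mathbf{p}_l$, one proportional to $-G\varphi_t'\,\mathbf{r}\cdot\mathbf{e}^\perp_{\varphi_t}\,\overline{m}'(g)\,\mathbf{e}_{\varphi_t}\cdot\mathbf{p}_l$ that picks up the extra spatial weight $\mathbf{r}\cdot\mathbf{e}^\perp_{\varphi_t}$, and one proportional to $\varphi_t'\,\overline{m}(g)\,\mathbf{e}^\perp_{\varphi_t}\cdot\mathbf{p}_l$ carrying $\overline{m}$ itself rather than $\overline{m}'$. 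The last two terms are precisely the new contributions produced by the additional $t$-dependence of $\mathbf{e}_{\varphi_t}$ and vanish for piecewise constant $\varphi_t$, recovering Theorem~\ref{Thm:FourierSlice} in view of Remark~\ref{Rem:SimToSeq}. Multiplying by $-\mu_0$, integrating against $c$ over $B_R$, and pulling the $t$-dependent but $\mathbf{r}$-independent factors out of the integral yields three spatial integrals of the form $\int_{B_R} c(\mathbf{r})\,w(\mathbf{r})\,\phi\brackets{G\brackets{s_t - \mathbf{r}\cdot\mathbf{e}_{\varphi_t}}}\dmathInt\mathbf{r}$ with $(w,\phi)$ equal to $(1,\overline{m}')$, $(\mathbf{r}\cdot\mathbf{e}^\perp_{\varphi_t},\overline{m}')$, and $(1,\overline{m})$.

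The central step is to recognize each such integral as a convolution evaluated at $s_t$. Inserting $\phi\brackets{G\brackets{s_t - \mathbf{r}\cdot\mathbf{e}_{\varphi_t}}} = \int_{-R}^{R}\delta\brackets{\mathbf{r}\cdot\mathbf{e}_{\varphi_t} - s}\,\phi\brackets{G\brackets{s_t - s}}\dmathInt s$ and exchanging the order of integration, the weight $w\equiv 1$ turns the spatial integral into $\int_{-R}^R \mathcal{R}c\brackets{\mathbf{e}_{\varphi_t},s}\,\phi\brackets{G(s_t-s)}\dmathInt s = \brackets{\phi(G\,\cdot) * \mathcal{R}c(\mathbf{e}_{\varphi_t},\cdot)}(s_t)$ by~\eqref{Def:RadonTransform}, whereas $w = \mathbf{r}\cdot\mathbf{e}^\perp_{\varphi_t}$ produces the analogous expression with $\widetilde{\mathcal{R}}$ from~\eqref{Def:WeightedRadon}. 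Matching the three resulting convolutions together with their prefactors against the definitions of $\mathcal{K}_{1,l}$, $\mathcal{K}_{2,l}$, and $\mathcal{K}_{3,l}$ then gives the claimed decomposition $\mathcal{A}^{\text{FFL}}_l = \mathcal{K}_{1,l}\circ\mathcal{R} + \mathcal{K}_{2,l}\circ\widetilde{\mathcal{R}} + \mathcal{K}_{3,l}\circ\mathcal{R}$.

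I expect the main obstacle to be rigor rather than algebra: justifying differentiation under the integral sign, which requires an integrable dominating bound on the $t$-derivative of the integrand (available from smoothness and boundedness of $\mathcal{L}$, hence of $\overline{m}$ and $\overline{m}'$, together with the compact support of $c$), and giving precise meaning to the formal delta-function convolution identity, ideally by reading $\mathcal{R}c(\mathbf{e}_\varphi,\cdot)$ as the genuine $L_2$ line-integral function via the coarea representation rather than as the distributional expression in~\eqref{Def:RadonTransform}. The purely computational content, namely the three-term split and the identification of the prefactors, is then routine once the identity $\dmath{t}\mathbf{e}_{\varphi_t} = \varphi_t'\mathbf{e}^\perp_{\varphi_t}$ is in hand.
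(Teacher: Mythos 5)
Your proposal is correct and follows essentially the same route as the paper: differentiate under the integral using $\frac{d}{dt}\mathbf{e}_{\varphi_t}=\varphi_t'\mathbf{e}^\perp_{\varphi_t}$, obtain the three summands with the prefactors $A\Lambda'(t)$, $-G\varphi_t'\,\mathbf{r}\cdot\mathbf{e}^\perp_{\varphi_t}$, and $\varphi_t'$, and identify each spatial integral as a convolution of $\mathcal{R}c$ or $\widetilde{\mathcal{R}}c$ evaluated at $s_t$. The only cosmetic difference is that the paper performs the reduction to a one-dimensional convolution by an explicit rotation of coordinates $\mathbf{r}'=\mathbf{R}^{-\varphi_t}\mathbf{r}$ rather than by inserting the delta identity and applying Fubini, which is the same slicing argument given that the Radon transform is defined via~\eqref{Def:RadonTransform}.
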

\begin{proof}
	Computing the derivative in~\eqref{Def:MPIForwardOp} and assuming spatially homogeneous receive coil sensitivities we obtain
	\begin{eqnarray*}
		\mathcal{A}^{\text{FFL}}_l c\brackets{t} 
		= &-&\mu_0 \; A\Lambda'(t)\; \mathbf{e}_{\varphi_t} \cdot \mathbf{p}_l \;   \int_{B_R} c(\mathbf{r}) \overline{m}' \left( -G \; \mathbf{r} \cdot \mathbf{e}_{\varphi_t} + A\Lambda(t)\right) \dmathInt\mathbf{r} \\
		&+&\mu_0  \; G\varphi'_t \; \mathbf{e}_{\varphi_t} \cdot \mathbf{p}_l \;  \int_{B_R} c(\mathbf{r}) \overline{m}' \left( -G \; \mathbf{r} \cdot \mathbf{e}_{\varphi_t} + A\Lambda(t)\right) \;\mathbf{r}
		\cdot \mathbf{e}^\perp_{\varphi_t}   \dmathInt\mathbf{r} \\
		&-&\mu_0 \; \varphi'_t \; \mathbf{e}^\perp_{\varphi_t}\cdot \mathbf{p}_l \int_{B_R}\; c(\mathbf{r}) \; \overline{m}\left( -G \; \mathbf{r} \cdot \mathbf{e}_{\varphi_t}	+ A\Lambda(t)\right) \dmathInt\mathbf{r}\,,
	\end{eqnarray*}
	where we used $(\mathbf{e}_{\varphi_t})'=\varphi'_t\mathbf{e}^\perp_{\varphi_t}$.
	We proceed similar to~\cite{knopp2011fourier} and rotate the coordinate system $\mathbf{r}':= \mathbf{R}^{-\varphi_t} \; \mathbf{r}$ such that the x-axis gets parallel to the FFL. Using
	\[\mathbf{R}^{\varphi_t}\mathbf{r}' = \mathbf{R}^{\varphi_t}  \begin{pmatrix}v' \\s'\end{pmatrix} = s'\,\mathbf{e}_{\varphi_t} - v'\,\mathbf{e}_{\varphi_t}^\perp\] 
	yields
	\begin{eqnarray*}
		\text{FFL}\brackets{\mathbf{e_{\varphi_t}}, s'} &=& \left\lbrace \mathbf{r} \in \R^2 : \; \mathbf{r} \cdot \mathbf{e}_{\varphi_t} = s'  \right\rbrace = 
		\left\lbrace s'\,\mathbf{e}_{\varphi_t} - v'\,\mathbf{e}_{\varphi_t}^\perp : \; v' \in \R\right\rbrace \\
		&=& \left\lbrace \mathbf{R}^{\varphi_t}  \begin{pmatrix}
			v' \\
			s'
		\end{pmatrix} : \; v' \in \R\right\rbrace
	\end{eqnarray*}
	and with $\mathbf{R}^{-\varphi_t}B_R=B_R$ we obtain 
	\begin{eqnarray*}
		\mathcal{A}^{\text{FFL}}_l c\brackets{t} 
		= &-&\mu_0 \; A\Lambda'(t) \; \mathbf{e}_{\varphi_t} \cdot \mathbf{p}_l    \int_{\R} \mathcal{R}c\left( \mathbf{e}_{\varphi_t}, s'\right)  \overline{m}' \left( -G \; s' + A\Lambda(t)\right) \dmathInt s' \\
		&+&\mu_0  \; G\varphi'_t \; \mathbf{e}_{\varphi_t} \cdot \mathbf{p}_l  \int_{\R}  \widetilde{\mathcal{R}}c\left(\mathbf{e}_{\varphi_t}, s'\right)  \overline{m}' \left( -G \; s' + A\Lambda(t)\right) \dmathInt s' \\
		&-&\mu_0  \; \varphi'_t \; \mathbf{e}^\perp_{\varphi_t}\cdot \mathbf{p}_l \int_{\R} \; \mathcal{R}c\left(\mathbf{e}_{\varphi_t}, s'\right) \; \overline{m}\left( -G \; s' + A\Lambda(t)\right) \dmathInt s'.
	\end{eqnarray*}
	%		In the second integral we cannot extract the Radon transform of the particle concentration itself but we rather a weighted version it
	%		\begin{equation*}
	%		\widetilde{\mathcal{R}}c\left( \mathbf{e}_{\varphi_t}, s'\right) := \int_{\mathbb{R}} c\left(s'\;\mathbf{e}_{\varphi_t} + v'\;\mathbf{e}_{\varphi_t}^\perp\right) v' \dmathInt v' = - \int_{\mathbb{R}^2} c\brackets{\mathbf{r}} \delta\brackets{\mathbf{r}\cdot\mathbf{e}_{\varphi_t}-s'} \mathbf{r} \cdot \mathbf{e}_{\varphi_t}^\perp \dmathInt \mathbf{r}.
	%		\end{equation*}
	Hence, we finally get
	\begin{eqnarray*}
		\mathcal{A}^{\text{FFL}}_l 
		= \mathcal{K}_{1,l} \circ \mathcal{R} +\mathcal{K}_{2,l} \circ \widetilde{\mathcal{R}} + \mathcal{K}_{3,l} \circ \mathcal{R}
	\end{eqnarray*}
	completing the proof.
\end{proof}
\begin{remark}
	The operators $\mathcal{K}_{2,l}$ and $\mathcal{K}_{3,l}$ result from the extra derivatives,  due to the additional time dependencies in~\eqref{Def:MPIForwardOp} for simultaneous line rotation. For sequential line rotation these terms vanish as we choose $\varphi_t$ to be piecewise constant and Theorem~\ref{Thm:FourierSlice_Sim} reduces to Theorem~\ref{Thm:FourierSlice}.
\end{remark}
\begin{remark}
	Regarding an oscillating non-rotating FFL with a phantom continuously rotating in opposite direction to the FFL rotation in the simultaneous setting, we would obtain
	\begin{eqnarray*}
		\mathcal{A}^{\text{FFL}}_l 
		= \mathcal{K}_{1,l} \circ \mathcal{R} +\mathcal{K}_{2,l} \circ \widetilde{\mathcal{R}} 
	\end{eqnarray*}	
	as forward operator. For simultaneous line rotation the term $\mathcal{K}_{3,l}$ results from the temporal change of the magnetic field orientation with respect to the receive coils sensitivity $\mathbf{p}_l$ and thus, is not present in the just considered case.
\end{remark}
Since the two additional terms scale with the FFL rotation speed, we assume, as proposed in~(\cite{bringout2020new},~\cite{knopp2011fourier}), that Theorem~\ref{Thm:FourierSlice} can still be used in case $\varphi_t^\prime$ is suitably small. To further confirm this, we conclude this section by deriving upper bounds for $\abs{\mathcal{K}_{2,l} \widetilde{\mathcal{R}}c\brackets{t}}$ respectively $\abs{\mathcal{K}_{3,l} \mathcal{R}c\brackets{t}}.$
\begin{lemma} \label{Lemma:Int2}
	Let $R=\frac{A}{G}$, i.e. we suppose that the particle concentration is completely located within the sampling region as usually $\frac{A}{G}$ is the maxium displacement of the FFL. Then, we have that
	\begin{eqnarray*}
		\abs{\mathcal{K}_{1,l} \mathcal{R}c\brackets{t}}
		&\geq&\frac{\abs{\Lambda'\brackets{t}}}{\varphi_t'}\abs{\mathcal{K}_{2,l}\widetilde{\mathcal{R}}c\brackets{t}}.
	\end{eqnarray*}
\end{lemma}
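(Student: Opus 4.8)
The plan is to strip away the common prefactors so that the claimed estimate collapses to a clean pointwise inequality between the two convolution integrals. Abbreviating
$$I_1 := \left[\overline{m}'\brackets{G\,\cdot}*\mathcal{R}c\brackets{\mathbf{e}_{\varphi_t},\cdot}\right]\brackets{s_t}, \qquad I_2 := \left[\overline{m}'\brackets{G\,\cdot}*\widetilde{\mathcal{R}}c\brackets{\mathbf{e}_{\varphi_t},\cdot}\right]\brackets{s_t},$$
the definitions of $\mathcal{K}_{1,l}$ and $\mathcal{K}_{2,l}$ give $\abs{\mathcal{K}_{1,l}\mathcal{R}c\brackets{t}}=\mu_0 A\abs{\Lambda'(t)}\,\abs{\mathbf{e}_{\varphi_t}\cdot\mathbf{p}_l}\,\abs{I_1}$ and $\abs{\mathcal{K}_{2,l}\widetilde{\mathcal{R}}c\brackets{t}}=\mu_0 G\,\varphi_t'\,\abs{\mathbf{e}_{\varphi_t}\cdot\mathbf{p}_l}\,\abs{I_2}$, where I use $\varphi_t'=2\pi f_{\text{rot}}>0$ in the simultaneous setting. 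Multiplying the second by $\abs{\Lambda'(t)}/\varphi_t'$ cancels $\varphi_t'$, so after dividing out the shared nonnegative factor $\mu_0\abs{\Lambda'(t)}\,\abs{\mathbf{e}_{\varphi_t}\cdot\mathbf{p}_l}$ (the asserted inequality reading $0\ge 0$ whenever this factor vanishes) and inserting $A=RG$, everything reduces to the single bound $R\,\abs{I_1}\ge\abs{I_2}$.

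To establish $R\abs{I_1}\ge\abs{I_2}$ I would assemble two ingredients. First, the Langevin function from~\eqref{eq:Langevin} is strictly increasing, since $\mathcal{L}'(\lambda)=\frac{1}{\lambda^2}-\frac{1}{\sinh^2\lambda}>0$ for $\lambda\neq 0$ (using $\abs{\sinh\lambda}>\abs{\lambda}$); as $\overline{m}(H)=m\mathcal{L}\brackets{\tfrac{\mu_0 m}{k_{\text{B}}T}H}$, this forces $\overline{m}'\ge 0$ on all of $\R$, so the convolution kernel $\overline{m}'\brackets{G(s_t-\cdot)}$ is everywhere nonnegative. Second, on the integration line $\mathbf{r}\cdot\mathbf{e}_{\varphi_t}=s'$ the coordinate transverse to $\mathbf{e}_{\varphi_t}$ satisfies $\abs{\mathbf{r}\cdot\mathbf{e}_{\varphi_t}^\perp}\le\norm{\mathbf{r}}\le R$ on $\supp(c)\subset B_R$ (Cauchy--Schwarz together with $\norm{\mathbf{e}_{\varphi_t}^\perp}=1$). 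Comparing the weighted transform~\eqref{Def:WeightedRadon} with~\eqref{Def:RadonTransform} and invoking $c\ge 0$ then yields the pointwise estimate $\abs{\widetilde{\mathcal{R}}c\brackets{\mathbf{e}_{\varphi_t},s'}}\le R\,\mathcal{R}c\brackets{\mathbf{e}_{\varphi_t},s'}$, where also $\mathcal{R}c\ge 0$.

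These combine directly. Since kernel and $\mathcal{R}c$ are both nonnegative, $I_1\ge 0$, so $\abs{I_1}=I_1$; pulling the modulus inside the $I_2$ integral (legitimate precisely because the kernel has constant sign) and then applying the weighted-Radon bound gives
$$\abs{I_2}\le\int_{\R}\overline{m}'\brackets{G(s_t-s')}\,\abs{\widetilde{\mathcal{R}}c\brackets{\mathbf{e}_{\varphi_t},s'}}\dmathInt s'\le R\int_{\R}\overline{m}'\brackets{G(s_t-s')}\,\mathcal{R}c\brackets{\mathbf{e}_{\varphi_t},s'}\dmathInt s'=R\,I_1,$$
which is exactly $R\abs{I_1}\ge\abs{I_2}$. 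The single load-bearing step — and the only place the argument could fail — is the sign control on $\overline{m}'$: it is what lets me move the absolute value inside the $I_2$ integral in the favourable direction, converting the usual $\abs{\int\,\cdot\,}\le\int\abs{\cdot}$ into a comparison I can chain against $I_1$ itself rather than against the unrelated quantity $\int\abs{\overline{m}'}\,\mathcal{R}c$. The remainder is bookkeeping of constants and the elementary geometric estimate on $\mathbf{r}\cdot\mathbf{e}_{\varphi_t}^\perp$.
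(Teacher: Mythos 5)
Your proof is correct and follows essentially the same route as the paper: reduce to the pointwise bound $\abs{\widetilde{\mathcal{R}}c\brackets{\mathbf{e}_{\varphi_t},s'}}\leq R\,\mathcal{R}c\brackets{\mathbf{e}_{\varphi_t},s'}$ via $\abs{\mathbf{r}\cdot\mathbf{e}_{\varphi_t}^\perp}\leq R$, insert $A=RG$, and transfer the inequality through the convolution. You are in fact slightly more careful than the paper, which states only the one-sided bound $\widetilde{\mathcal{R}}c\leq R\,\mathcal{R}c$ and leaves the nonnegativity of $\overline{m}'$ (needed to pass the pointwise estimate through the convolution) implicit, whereas you verify both explicitly.
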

\begin{proof}
	It holds for $\brackets{\mathbf{e}_\varphi,s} \in Z$ that
	\begin{eqnarray*}
		\widetilde{\mathcal{R}}c\left(\mathbf{e}_{\varphi}, s\right)&=&\int_{B_R} c\brackets{\mathbf{r}} \delta\brackets{\mathbf{r}\cdot\mathbf{e}_\varphi-s} \mathbf{r}
		\cdot \mathbf{e}^\perp_{\varphi} \dmathInt \mathbf{r}
		\\
		&=&-\int_{-\sqrt{R^2-s^2}}^{\sqrt{R^2-s^2}} c\left(s\;\mathbf{e}_{\varphi} - v\;\mathbf{e}_{\varphi}^\perp\right) v \dmathInt v 
		\leq
		R \; \mathcal{R}c\left( \mathbf{e}_{\varphi}, s\right).
	\end{eqnarray*}
	Hence, we get
	\begin{eqnarray*}
		\abs{\mathcal{K}_{1,l} \mathcal{R}c\brackets{t}}
		&=&\abs{\mu_0 \; A \Lambda'(t) \; \mathbf{e}_{\varphi_t} \cdot \mathbf{p}_l \; \overline{m}' \left(G \; \cdot\right) * \mathcal{R}c\brackets{\mathbf{e}_{\varphi_t},\cdot}  \left(s_t\right)}	\\	
		&=&\abs{\mu_0 \; G R \Lambda'(t) \; \mathbf{e}_{\varphi_t} \cdot \mathbf{p}_l \; \overline{m}' \left(G \; \cdot\right) * \mathcal{R}c\brackets{\mathbf{e}_{\varphi_t},\cdot}  \left(s_t\right)} \\
		&\geq&\abs{\mu_0 \; G \Lambda'(t) \; \mathbf{e}_{\varphi_t} \cdot \mathbf{p}_l \; \overline{m}' \left(G \; \cdot\right) * \widetilde{\mathcal{R}}c\brackets{\mathbf{e}_{\varphi_t},\cdot}  \left(s_t\right)}
		\\
		&=&\abs{\frac{\Lambda'\brackets{t}}{\varphi_t'}\mu_0 \; G \varphi_t' \; \mathbf{e}_{\varphi_t} \cdot \mathbf{p}_l \; \overline{m}' \left(G \; \cdot\right) * \widetilde{\mathcal{R}}c\brackets{\mathbf{e}_{\varphi_t},\cdot}  \left(s_t\right)}\\
		&=&\frac{\abs{\Lambda'\brackets{t}}}{\varphi_t'}\abs{\mathcal{K}_{2,l}\widetilde{\mathcal{R}}c\brackets{t}}.
	\end{eqnarray*}
\end{proof}
\\
The ratio in the last lemma relates the FFL translation with the rotation speed and thus, if it is sufficiently large, $\mathcal{K}_{2,l}\widetilde{\mathcal{R}}c\brackets{t}$ might be neglected in the image reconstruction process. As already mentioned, usually the excitation function is chosen to be sinusoidal, e.g. $\Lambda\brackets{t}=-\cos\brackets{2\pi f_{\text{d}}t}$, with drive frequency $f_{\text{d}}>0$. Then, we get that
\begin{equation}
\frac{\abs{\Lambda'\brackets{t}}}{\varphi_t'} = \frac{f_{\text{d}}}{f_{\text{rot}}}\abs{\sin\brackets{2\pi f_{\text{d}}t}}.
\label{Eqn:SpeedRatio}
\end{equation}
At each turning point of the FFL the first term $\mathcal{K}_{1,l}\mathcal{R}c$ becomes zero. By~\eqref{Eqn:SpeedRatio} the time intervals, in which the second term $\mathcal{K}_{2,l}\widetilde{\mathcal{R}}c$ might overweight the first integral, are the smaller the larger the ratio $\frac{f_{\text{d}}}{f_{\text{rot}}}$ gets. According to~\cite{weizenecker2018fokker} drive frequencies are around $1$ kHz to $150$ kHz. A typical upper bound for the rotation frequency is $100$~Hz \cite{bringout2020new} leading to $\frac{f_{\text{d}}}{f_{\text{rot}}} \geq 10.$ At last, we state an example of particle concentrations leading to a vanishing $\mathcal{K}_{2,l}\widetilde{\mathcal{R}}c$.
\begin{example}
	Let $c$ be radial symmetric, i.e. $c(\mathbf{r})=c(\left| \mathbf{r}\right| )$. We compute
	\begin{eqnarray*}
		-\widetilde{\mathcal{R}}c\left(\mathbf{e}_{\varphi}, s\right)&=&\int_{-\sqrt{R^2-s^2}}^{\sqrt{R^2-s^2}} c\left(s\;\mathbf{e}_{\varphi} - v\;\mathbf{e}_{\varphi}^\perp\right) v \dmathInt v 
		= \int_{-\sqrt{R^2-s^2}}^{\sqrt{R^2-s^2}} c\left(\sqrt{s^2 + v^2} \right) \; v \dmathInt v = 0.
	\end{eqnarray*}
\end{example}
Next, we give an estimate for the third integral $\mathcal{K}_{3,l} \mathcal{R}c\brackets{t}.$
\begin{lemma}\label{Lemma:Int3}
	With $m$ denoting the magnetic moment of a single particle and $N_p$ the total amount of particles contained in the tracer injection, it holds that
	\begin{eqnarray*}
		\abs{\mathcal{K}_{3,l} \mathcal{R}c\brackets{t}}
		&\leq&	\mu_0 \; \varphi'_t \; \left\|  \mathbf{p}_l \right\|  \; m \; N_p.
	\end{eqnarray*}
\end{lemma}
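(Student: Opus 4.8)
The plan is to start from the explicit convolution form of $\mathcal{K}_{3,l}$ established in Theorem~\ref{Thm:FourierSlice_Sim} and to estimate its factors one at a time. Writing out the convolution and recalling $A\Lambda\brackets{t}=G\,s_t$, so that $-G\,s'+A\Lambda\brackets{t}=G\brackets{s_t-s'}$, I have
\[
\mathcal{K}_{3,l}\mathcal{R}c\brackets{t}=-\mu_0\,\varphi'_t\,\brackets{\mathbf{e}^\perp_{\varphi_t}\cdot\mathbf{p}_l}\int_\R \mathcal{R}c\brackets{\mathbf{e}_{\varphi_t},s'}\,\overline{m}\brackets{G\brackets{s_t-s'}}\dmathInt s'.
\]
Taking absolute values, the geometric prefactor is handled by the Cauchy--Schwarz inequality: since $\mathbf{e}^\perp_{\varphi_t}$ is a unit vector, $\abs{\mathbf{e}^\perp_{\varphi_t}\cdot\mathbf{p}_l}\leq\norm{\mathbf{p}_l}$.

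Next I would bound the remaining integral. The decisive observation is that the modulus of the mean magnetic moment is uniformly bounded: because $\overline{m}\brackets{H}=m\,\mathcal{L}\brackets{\frac{\mu_0 m}{k_{\text{B}}T}H}$ and the Langevin function~\eqref{eq:Langevin} takes values in $\left[-1,1\right]$, we have $\abs{\overline{m}\brackets{H}}\leq m$ for every $H$. Pulling the absolute value inside the integral and inserting this pointwise estimate gives
\[
\abs{\int_\R \mathcal{R}c\brackets{\mathbf{e}_{\varphi_t},s'}\,\overline{m}\brackets{G\brackets{s_t-s'}}\dmathInt s'}\leq m\int_\R\abs{\mathcal{R}c\brackets{\mathbf{e}_{\varphi_t},s'}}\dmathInt s'.
\]

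It remains to evaluate this last integral. Since the particle concentration is non-negative, so is its Radon transform, and the absolute value may simply be dropped. Integrating the defining $\delta$-expression~\eqref{Def:RadonTransform} in $s'$ and exchanging the order of integration collapses the delta, leaving $\int_\R\mathcal{R}c\brackets{\mathbf{e}_{\varphi_t},s'}\dmathInt s'=\int_{B_R}c\brackets{\mathbf{r}}\dmathInt\mathbf{r}$. This total integral of the concentration equals the total number of particles $N_p$ contained in the tracer injection. Multiplying the three collected factors $\mu_0\,\varphi'_t$, $\norm{\mathbf{p}_l}$, and $m\,N_p$ then yields the claimed bound.

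The argument is essentially a chain of elementary estimates — Cauchy--Schwarz for the dot product, the triangle inequality to move the modulus inside the integral, and the range of the Langevin function for the pointwise bound on $\overline{m}$ — so no single analytic step is technically hard. The one point that genuinely requires care, rather than routine calculation, is the identification $\int_{B_R}c\,\dmathInt\mathbf{r}=N_p$ together with the implicit use of $c\geq 0$; this is what converts the purely analytic estimate into the stated physical bound expressed through the single-particle moment $m$ and the particle count $N_p$.
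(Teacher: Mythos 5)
Your proposal is correct and follows essentially the same route as the paper: both rest on the pointwise bound $\abs{\overline{m}}\leq m$ from the Langevin model, the Cauchy--Schwarz estimate $\abs{\mathbf{e}^\perp_{\varphi_t}\cdot\mathbf{p}_l}\leq\norm{\mathbf{p}_l}$, and the identification $\int_{B_R}c\,\dmathInt\mathbf{r}=N_p$. The only cosmetic difference is that you start from the convolution form and integrate out the Radon transform, while the paper works directly with the spatial integral over $B_R$ from the proof of Theorem~\ref{Thm:FourierSlice_Sim}.
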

\begin{proof}
	According to the proof of Theorem~\ref{Thm:FourierSlice_Sim} we have
	\begin{eqnarray*}
		\abs{\mathcal{K}_{3,l} \mathcal{R}c\brackets{t}} 
		&=& \abs{ \mu_0 \; \varphi'_t \; \mathbf{e}^\perp_{\varphi_t}\cdot \mathbf{p}_l \int_{B_R}\; c(\mathbf{r}) \; \overline{m}\left( -G \; \mathbf{r} \cdot \mathbf{e}_{\varphi_t}	+ A\Lambda(t)\right) \dmathInt\mathbf{r}}.
	\end{eqnarray*}
	The modulus of the mean magnetic moment is bounded by $m$ (cf. Langevin model in Section~\ref{Sec:FFL}) yielding
	\begin{eqnarray*}
		\abs{\mathcal{K}_{3,l} \mathcal{R}c\brackets{t}}
		&\leq&
		\abs{\mu_0 \; \varphi'_t \; \mathbf{e}^\perp_{\varphi_t}\cdot \mathbf{p}_l \; m   
			\int_{B_R}\; c(\mathbf{r}) \dmathInt \mathbf{r}} \leq 
		\mu_0 \; \varphi'_t \; \left\|  \mathbf{p}_l \right\|  \; m \; N_p.
	\end{eqnarray*}
\end{proof}

\begin{remark}
	For practical reasons it might be convenient to state an upper bound in terms of the maximal particle concentration $c_{\max}$. This can be easily obtained by using $N_p \leq c_{\max}\pi R^2.$ 
\end{remark}
In case the phantom is fully located within the saturation area on one side of the FFL, the above inequality holds approximately with equality. The estimate in the last Lemma can be computed beforehand to measurements as all components are determined by the scanner setup and the choice of the injected tracer. Thus, this upper bound can be determined and compared to the magnitudes of measured data in order to evaluate whether incorporation is needed. From the $s$-shape of the Langevin function it follows that $\abs{\mathcal{K}_{3,l} \mathcal{R}c\brackets{t}}$ is largest at turning points of the FFL, which are the zero crossings of $\abs{\mathcal{K}_{1,l} \mathcal{R}c\brackets{t}}$.

%%%%%%%%%%%%%%%%%%%%%%%%%%%%%%%%%%
\section{TV regularized Image Reconstruction} \label{ImageReco}
%%%%%%%%%%%%%%%%%%%%%%%%%%%%%%%%%%
Inspired by~\cite{Tovey_2019} we reconstruct particle concentration and Radon data simultaneously via total variation regularization. To this end, we first introduce the space of functions of bounded variation $\BV$  on the domain $B_R$ 
\begin{eqnarray*}
	\BV\brackets{B_R} := \left\lbrace c \in L_1\brackets{B_R,\R} : \; \TV\brackets{c} < \infty\right\rbrace ,  
\end{eqnarray*}
with
\begin{equation*}
\TV\brackets{c} := \sup\left\lbrace \int_{\R^2} c\brackets{\mathbf{r}}  \operatorname{div}\brackets{\mathbf{g}}\brackets{\mathbf{r}} \dmathInt \mathbf{r} : \; \mathbf{g} \in C^\infty_0\brackets{B_R,\R^2}, \; \abs{\mathbf{g}\brackets{\mathbf{r}}}_2 <1 \; \text{for all} \; \mathbf{r} \right\rbrace.
\end{equation*}
Equipped with the norm $\norm{\cdot}_{\BV} := \norm{\cdot}_{L_1} + \TV\brackets{\cdot}$ this space becomes a Banach space and it holds $\BV\brackets{B_R} \subset L_2\brackets{B_R,\R}$. Additionally, the Poincaré-Wirtinger inequality holds
\begin{equation}
\norm{c - \overline{c}}_{L_2} \leq C\, \TV\brackets{c}, \quad \overline{c} = \frac{1}{\pi R^2} \int_{B_R} c\brackets{\mathbf{r}} \dmathInt \mathbf{r}
\label{Eqn_Poincare}
\end{equation}
for some constant $C>0$.
For more information concerning TV we recommend to consult e.g.~\cite{acar1994analysis} and \cite{burger2013guide}.\\\\
Define $\mathcal{D} := L_2\brackets{B_R,\R}\times L_2\brackets{Z,\R }$ and let $A_l: \mathcal{D}   \to L_2\brackets{\R^+,\R}$ such that
\begin{equation}
A_l\brackets{c,v} := \mathcal{K}_{1,l} v + \brackets{\mathcal{K}_{2,l} \circ \widetilde{\mathcal{R}}}c + \mathcal{K}_{3,l} v
\label{Eqn:NewOp}
\end{equation}
with operators $\mathcal{K}_{i,l},\; i=1,2,3$ defined in Theorem~\ref{Thm:FourierSlice_Sim}. It holds that $A_l\brackets{c,\mathcal{R}c}=\mathcal{A}^{\text{FFL}}_l c$ for $c\geq0$. Now, we are able to state the minimization problem we want to solve
\begin{equation}
\min_{\brackets{c,v} \in \mathcal{C}} \frac{1}{2}\sum_l \norm{A_l\brackets{c,v} - u_l}^2_{L_2} + \frac{\alpha_1}{2}\norm{\mathcal{R}c - v}^2_{L_2} + \alpha_2 \TV\brackets{c},
\label{Eqn:OptProblem}
\end{equation}
with feasible set $\mathcal{C} := \left\lbrace \brackets{c,v} \in \mathcal{D} : \; c\geq 0,\; v\geq0\right\rbrace,$ given data $u_l,\; l=1,\dots,L$, weighting parameter $\alpha_1$, and regularization parameter $\alpha_2>0$. Note that actively we only penalize the choice of the particle concentration. Nevertheless, if needed an additional regularization term acting on the Radon data $v$ can be included, e.g. directional TV regularization as used in~\cite{Tovey_2019}.
\begin{theorem}
	The minimization problem~\eqref{Eqn:OptProblem} has a solution $\brackets{c^*,v^*} \in \mathcal{C}$ and $c^* \in \BV\brackets{B_R}.$
\end{theorem}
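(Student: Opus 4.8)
The plan is to apply the direct method of the calculus of variations. Denote the objective in~\eqref{Eqn:OptProblem} by $J(c,v)$. It is proper and bounded below by $0$, and attains a finite value at the feasible point $(0,0)\in\mathcal{C}$, namely $J(0,0)=\tfrac12\sum_l\norm{u_l}_{L_2}^2$; hence $\inf_{\mathcal{C}}J=:J^*<\infty$. I would fix a minimizing sequence $(c_n,v_n)\in\mathcal{C}$ with $J(c_n,v_n)\to J^*$, so that for large $n$ each of the three nonnegative summands is bounded by some $M<\infty$. In particular $\TV(c_n)\le 2M/\alpha_2$, $\norm{\mathcal{R}c_n-v_n}_{L_2}\le(2M/\alpha_1)^{1/2}$, and $\norm{A_l(c_n,v_n)-u_l}_{L_2}\le(2M)^{1/2}$ for every $l$.

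Second, I would extract a convergent subsequence. Since we are in two dimensions, the Sobolev embedding $\BV(B_R)\hookrightarrow L_2(B_R)$ is continuous and $\BV(B_R)\hookrightarrow L_1(B_R)$ is compact, so it suffices to bound $c_n$ in $\BV$, i.e. to bound $\TV(c_n)$ (already done) and the mass $\norm{c_n}_{L_1}=\int_{B_R}c_n$ (using $c_n\ge0$). The oscillatory part is immediate from Poincaré--Wirtinger~\eqref{Eqn_Poincare}: $\norm{c_n-\overline{c_n}}_{L_2}\le C\,\TV(c_n)$. For the mass I would exploit nonnegativity together with the mass identity for the Radon transform: for $c\ge0$ one has $\mathcal{R}c\ge0$ and $\int_Z\mathcal{R}c=2\pi\int_{B_R}c$, whence, on the bounded set $Z$, $\norm{c_n}_{L_1}=\tfrac1{2\pi}\norm{\mathcal{R}c_n}_{L_1(Z)}\le\tfrac{\abs{Z}^{1/2}}{2\pi}\norm{\mathcal{R}c_n}_{L_2(Z)}\le\tfrac{\abs{Z}^{1/2}}{2\pi}\brackets{\norm{v_n}_{L_2}+\norm{\mathcal{R}c_n-v_n}_{L_2}}$. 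Thus, modulo the coupling bound, controlling the mass of $c_n$ reduces to bounding $\norm{v_n}_{L_2}$. Granting such a bound (see below), $c_n$ is bounded in $\BV$, and after passing to a subsequence $c_n\to c^*$ strongly in $L_1$, weakly in $L_2$, and weakly-$*$ in $\BV$, while $v_n\rightharpoonup v^*$ weakly in $L_2$. The constraint set $\mathcal{C}$ is preserved in the limit: $c^*\ge0$ a.e. from $L_1$-convergence, and $v^*\ge0$ since the nonnegative cone is closed and convex, hence weakly closed.

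Third, lower semicontinuity. All operators involved ($\mathcal{R}$, $\widetilde{\mathcal{R}}$ and the convolution operators $\mathcal{K}_{i,l}$) are bounded and linear, hence weak-to-weak continuous, so $A_l(c_n,v_n)\rightharpoonup A_l(c^*,v^*)$ and $\mathcal{R}c_n-v_n\rightharpoonup \mathcal{R}c^*-v^*$ weakly in $L_2$ along the subsequence; since the $L_2$-norm is convex and continuous, the two quadratic terms are weakly lower semicontinuous, and $\TV$ is weakly-$*$ lower semicontinuous on $\BV$. Therefore $J(c^*,v^*)\le\liminf_n J(c_n,v_n)=J^*$, so $(c^*,v^*)$ is a minimizer, and $c^*\in\BV(B_R)$ by the $\BV$-bound and lower semicontinuity of $\TV$.

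The main obstacle is the coercivity step, i.e. the a priori bound on $\norm{v_n}_{L_2}$ (equivalently, by the computation above, on the mass of $c_n$ and on $\norm{\mathcal{R}c_n}_{L_2(Z)}$, which are all tied together by the coupling term). The only remaining handle is the data term: subtracting the already-bounded contribution $\mathcal{K}_{2,l}\widetilde{\mathcal{R}}c_n$ (bounded because $\widetilde{\mathcal{R}}$ annihilates the constant part of $c_n$, so $\widetilde{\mathcal{R}}c_n=\widetilde{\mathcal{R}}(c_n-\overline{c_n})$ is controlled via Poincaré) shows that $(\mathcal{K}_{1,l}+\mathcal{K}_{3,l})v_n$ is bounded in $L_2$ for each $l$. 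Turning this into a bound on $v_n$ itself requires the combined operator $v\mapsto\brackets{(\mathcal{K}_{1,l}+\mathcal{K}_{3,l})v}_{l=1,\dots,L}$ to be bounded below on the relevant subspace; since these are convolution (hence smoothing, ill-posed) operators this need not hold in general, and this is precisely where either a coercivity assumption on the forward model, the mapping properties exploited jointly with the coupling term, or the optional additional regularization on $v$ mentioned after~\eqref{Eqn:OptProblem} must enter. Once such a bound is secured the remaining steps are routine.
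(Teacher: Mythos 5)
You have correctly identified the crux of the proof --- an a priori bound on $\norm{v_n}_{L_2}$, equivalently coercivity of the objective --- but you leave it open, so your argument is incomplete precisely at the one non-routine step. The paper's route is structurally different and avoids your compactness machinery altogether: it works on the Hilbert space $\mathcal{D}=L_2\brackets{B_R,\R}\times L_2\brackets{Z,\R}$, adds the indicator of $\mathcal{C}$, and observes that $J$ is proper, jointly convex, and weakly lower semicontinuous (all operators are bounded linear, the quadratic terms are convex and strongly continuous, $\TV$ is weakly lower semicontinuous, $\mathcal{C}$ is closed and convex), so that existence follows from the standard theorem once coercivity on $\mathcal{D}$ is shown; no compact embedding of $\BV$ into $L_1$ is needed. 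For coercivity the paper invokes the Acar--Vogel implication~\eqref{Eqn:BV_Coercivity}, resting on the fact that $\mathcal{R}$ does not annihilate constants: if $\norm{v}_{L_2}$ stays bounded, then boundedness of $\TV\brackets{c}$ and of $\norm{\mathcal{R}c-v}_{L_2}$ bounds $\norm{\mathcal{R}c}_{L_2}$, and since by Poincar\'e--Wirtinger~\eqref{Eqn_Poincare} the only component of $c$ not controlled by $\TV$ is the constant mode $\overline{c}\,\mathbf{1}_{B_R}$, the lower bound $\norm{\mathcal{R}\brackets{\overline{c}\,\mathbf{1}_{B_R}}}=\abs{\overline{c}}\,\norm{\mathcal{R}\mathbf{1}_{B_R}}$ with $\mathcal{R}\mathbf{1}_{B_R}\neq 0$ pins down $\abs{\overline{c}}$ and hence $\norm{c}_{\BV}$.

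This also shows why your final objection aims at too large a subspace. Along a sublevel set you already have $c_n=\overline{c_n}\,\mathbf{1}_{B_R}+w_n$ with $\norm{w_n}_{L_2}$ bounded (Poincar\'e) and $v_n=\mathcal{R}c_n+e_n$ with $\norm{e_n}_{L_2}$ bounded (coupling term), so $v_n$ is confined, up to bounded remainders, to the single ray spanned by $\mathcal{R}\mathbf{1}_{B_R}$; the ``lower bound on the relevant subspace'' you ask for is therefore not an inverse estimate for a smoothing operator but merely the non-vanishing of one vector. That said, your instinct that the data term must be consulted is pointing at a real subtlety: in the regime $c_n=n\,\mathbf{1}_{B_R}$, $v_n=n\,\mathcal{R}\mathbf{1}_{B_R}$ (feasible, with $\norm{\brackets{c_n,v_n}}_{\mathcal{D}}\to\infty$) both the coupling term and $\TV\brackets{c_n}$ vanish identically, so the coupling term alone cannot control the constant mode when $v$ is free to track $\mathcal{R}c$; there the only diverging contribution is $\brackets{\mathcal{K}_{1,l}+\mathcal{K}_{3,l}}v_n=n\,\mathcal{A}^{\text{FFL}}_l\mathbf{1}_{B_R}$ (using $\widetilde{\mathcal{R}}\mathbf{1}_{B_R}=0$), i.e.\ coercivity ultimately needs the non-degeneracy condition that the forward operator does not annihilate the constant concentration --- a physically obvious assumption left implicit in the paper. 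Your mass identity $\norm{c_n}_{L_1}=\frac{1}{2\pi}\norm{\mathcal{R}c_n}_{L_1\brackets{Z}}$ for $c_n\geq 0$ is a nice alternative to the Poincar\'e route for relating the mass of $c_n$ to $v_n$, but it does not by itself remove this one-dimensional obstruction, so you should close the argument by isolating the constant mode as above rather than seeking a lower bound for the convolution operators on all of $L_2\brackets{Z,\R}$.
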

\begin{proof}
	Rewriting the constrained optimization problem~\eqref{Eqn:OptProblem} using the indicator function $\delta_{\mathcal{C}}$ yields
	\begin{equation*}
	\min_{\brackets{c,v} \in \mathcal{D}}  J\brackets{c,v},
	\end{equation*}
	\begin{equation*}
	J\brackets{c,v}:=\frac{1}{2}\sum_l \norm{A_l\brackets{c,v} - u_l}^2_{L_2} + \frac{\alpha_1}{2}\norm{\mathcal{R}c - v}^2_{L_2} + \alpha_2 \TV\brackets{c} + \delta_{\mathcal{C}}\brackets{c,v}.
	\end{equation*}
	Note that $\mathcal{D}$ is a Hilbertspace with inner product 
	\begin{equation*}
	\left\langle \cdot, \cdot \right\rangle_{\mathcal{D}}:=\left\langle \cdot, \cdot \right\rangle_{L_2\brackets{B_R,\R} }+\left\langle \cdot, \cdot \right\rangle_{L_2\brackets{Z,\R }}.
	\end{equation*}
	Obviously $J$ is proper. The total variation $\TV\brackets{c}$ is convex and weakly lower semicontinuous according to~\cite{acar1994analysis}. Further, $A_l$ and $\mathcal{R}$ are linear bounded operators and the feasible set $\mathcal{C}$ is closed and convex. Thus, $J$ is jointly convex in $\brackets{c,v}$ and weakly lower semicontinuous. For existence the only remaining part to show is that $J$ is also coercive. From~\cite{acar1994analysis} we know that
	\begin{equation}
	\norm{c}_{\BV} \to \infty \quad \Longrightarrow \quad \frac{\alpha_1}{2}\norm{\mathcal{R}c - v}^2_{L_2} + \alpha_2 \TV\brackets{c} \to \infty
	\label{Eqn:BV_Coercivity}
	\end{equation}
	as $\mathcal{R}c$ does not annihilate for constant functions. Regard now
	\begin{equation*}
	\norm{\brackets{c,v}}_{\mathcal{D}}^2 = \norm{c}_{L_2}^2 + \norm{v}_{L_2}^2 \to \infty.
	\end{equation*}
	Then, either $\norm{c}_{L_2}$ or $\norm{v}_{L_2}$ needs to tend to infinity. If $\norm{v}_{L_2} \to \infty$ we directly get that $J\brackets{c,v}\to \infty$ as $\norm{\mathcal{R}c - v}^2_{L_2} \to \infty$. Thus, we assume that $\norm{v}_{L_2} $ is bounded and $\norm{c}_{L_2} \to \infty$. From~\eqref{Eqn_Poincare} we get with $\mathbf{1}_{B_R}$ denoting the characteristic function
	\begin{eqnarray*}
		\norm{c}_{L_2} &\leq& \norm{c-\overline{c}\,\mathbf{1}_{B_R}}_{L_2} + \norm{\overline{c}\,\mathbf{1}_{B_R}}_{L_2} \leq C \,\TV\brackets{c} + \sqrt{\pi R^2}\abs{\overline{c}} \\
		&=& C \,\TV\brackets{c} + \frac{1}{\sqrt{\pi R^2}}\norm{c}_{L_1} 
		\leq \max\left\lbrace C, \frac{1}{\sqrt{\pi R^2}}\right\rbrace \norm{c}_{\BV} 
	\end{eqnarray*}	
	and thus, $\norm{c}_{\BV} \to \infty$ if $\norm{c}_{L_2} \to \infty$. Together with~\eqref{Eqn:BV_Coercivity} we finally obtain that $J$ is coercive yielding existence of a minimizer. 
\end{proof}

%%%%%%%%%%%%%%%%%%%%%%%%%%%%%%%%%%
\section{Numerical Results} \label{Results}
%%%%%%%%%%%%%%%%%%%%%%%%%%%%%%%%%%
Finally, we state numerical results for synthetic data. Thereby, the basis for our data simulation is given by the framework developed by Gael Bringout~\cite{bringout2016field} available at~\url{https://github.com/gBringout}. Accordingly, the tracer is modeled as a solution with $0.5 \, \frac{\text{mol}\brackets{\text{Fe}_3\text{O}_4}}{\text{m}^3}$ concentration of magnetite with $30$ nm core diameter and $\frac{0.6}{\mu_0}$ T saturation magnetization. As excitation function we choose
\begin{equation}
\Lambda\brackets{t} = - \cos \brackets{2\pi f_{\text{d}} t}
\label{Eqn:Excitation}
\end{equation}
with drive frequency $f_{\text{d}}$. Further parameters can be found in Table~\ref{table:SimulationParameters}. Our concentration phantom (cf. Figure~\ref{Fig:Phantom}) is normalized to one and located within a circle around the origin with radius $\frac{A}{G}$, which is the maximum displacement of the FFL. Thus, we set the FOV to be $\left[-\frac{A}{G},\frac{A}{G} \right] \times \left[-\frac{A}{G},\frac{A}{G} \right]$. For data generation we divide the FOV into $501 \times 501$, for image reconstruction into $201\times 201$ pixel. In case of sequential line rotation, we gather data for $25$ sweeps of the FFL through the FOV and angles equally distributed in $\left[ 0,\pi\right]$. More precisely, we regard angles $\varphi_j = \brackets{j-1}\frac{\pi}{25},\; j=1,\dots,25$. Considering simultaneous line rotation, we choose a total measurement time of $\frac{1}{2f_{\text{rot}}}$ resulting in the same amount of FFL translations through the phantom covering angles again in $\left[ 0,\pi\right]$. 
\begin{table}[htbp]
	\centering
	\caption{Simulation parameters }
	\renewcommand{\arraystretch}{1.3}
	\begin{tabular}{|c|c|c|c|}
		\hline
		\textbf{Parameter} & \textbf{Explanation} & \textbf{Value} & \textbf{Unit} \\
		\hline \hline
		& & & \\[-1.2em]
		$\mu_0$ & magnetic permeability & $4\pi\cdot 10^{-7}$ & $\text{T}\text{m}\text{A}^{-1}$ \\
		$k_{\text{B}}$ & Boltzmann constant & $1.380650424\cdot 10^{-23}$ & $\text{J}\text{K}^{-1}$ \\
		\hline
		$G$ & gradient strength & $4$ & $\text{T}\brackets{\text{m}\mu_0}^{-1}$ \\
		$A$ & drive peak amplitude & $15$  & $\text{mT}\mu_0^{-1}$\\
		$\mathbf{p}_1$ & sensitivity of the first receive coil & $\left[0.015/293.29, 0 \right]^T $  & \\
		$\mathbf{p}_2$ & sensitivity of the second receive coil & $\left[0, 0.015/379.71\right]^T $  & \\
		\hline
		$f_{\text{d}}$ & drive-field frequency & 25 & kHz\\
		$f_{\text{rot}}$ & line rotation frequency & 1 & kHz\\
		$f_{\text{s}}$ & sampling frequency & 8 & MHz\\
		\hline
	\end{tabular}
	\label{table:SimulationParameters}
	\renewcommand{\arraystretch}{1}
\end{table}\\
In the following, we denote
\begin{equation*}
\widehat{\mathcal{K}}_{i,l}f\brackets{t}:=\frac{\mathcal{K}_{i,l}f\brackets{t}}{\underset{t}{\max}\left\lbrace \abs{ \mathcal{K}_{1,l} f\brackets{t}} \right\rbrace }, \quad \text{for}\; i=1,2,3,\;l=1,2.
\end{equation*}
For our reconstructions, we neglect the second term of the forward operator~\eqref{Eqn:NewOp}, which is justified due to Lemma~\ref{Lemma:Int2} and the choice of frequencies. This is emphasized by Figure~\ref{Fig:IntegralEstimates}. The left plot shows $\widehat{\mathcal{K}}_{1,1} \mathcal{R}c$ in comparison to $\widehat{\mathcal{K}}_{3,1} \mathcal{R}c$, whereas the right plot images $\widehat{\mathcal{K}}_{2,1} \widetilde{\mathcal{R}}c + \widehat{\mathcal{K}}_{3,1} \mathcal{R}c$ and $\widehat{\mathcal{K}}_{3,1} \mathcal{R}c$. We find that both additional terms are small compared to $\widehat{\mathcal{K}}_{1,1} \mathcal{R}c$. Especially the second term seems to have only an influence, when the main part reaches its highest values, while the third term is largest for the zero crossings of $\widehat{\mathcal{K}}_{1,1} \mathcal{R}c$. \\
\begin{figure}[htbp]
	\begin{subfigure}[b]{0.5\textwidth}
		\centering
		\includegraphics[width=1.2\linewidth, trim=2.8cm 8cm 0 8cm, clip]{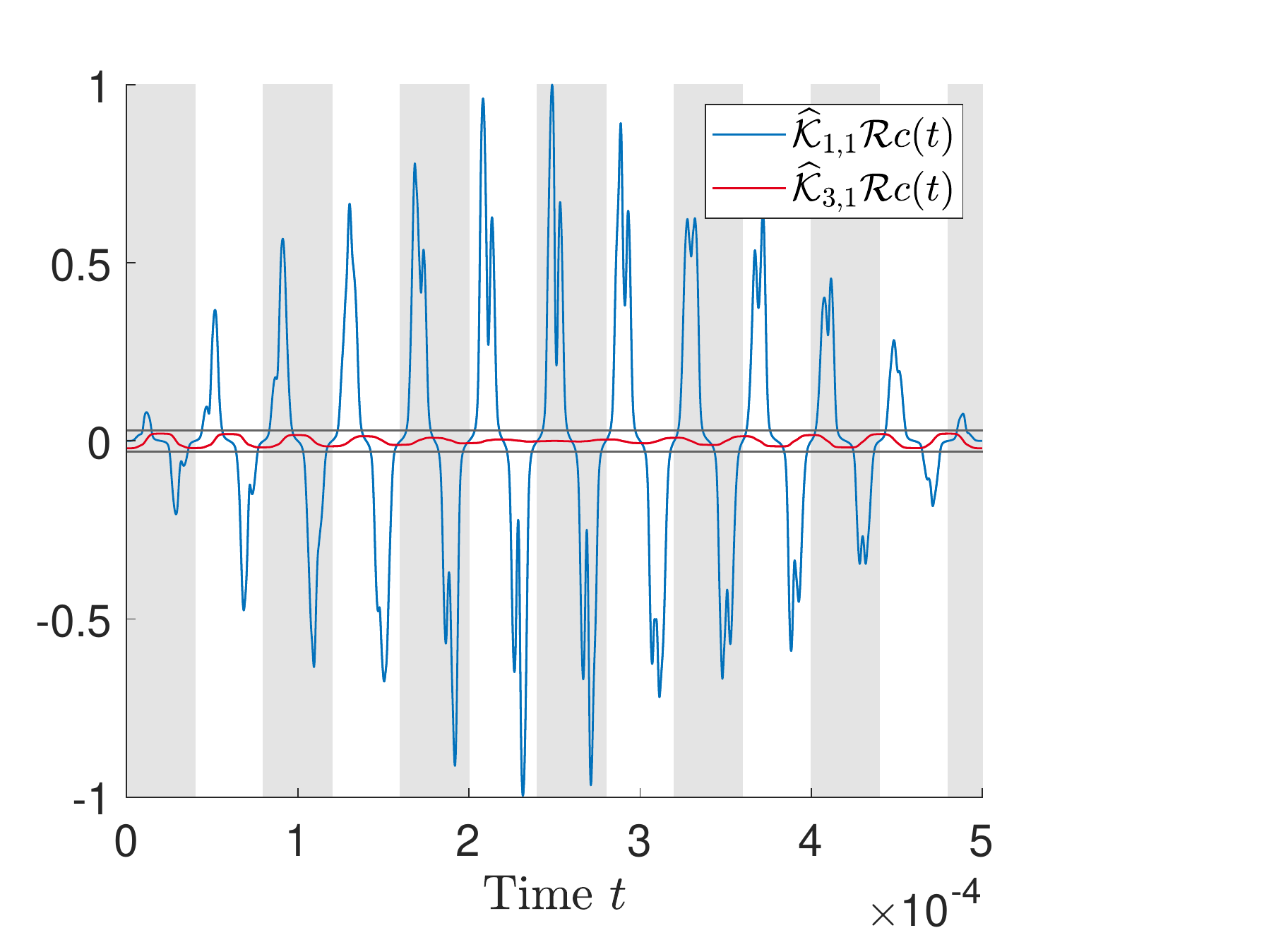}
		\subcaption{}
		\label{Fig:IntegralEstimates_2}
	\end{subfigure}
	\hfill
	\begin{subfigure}[b]{0.5\textwidth}
		\centering
		\includegraphics[width=1.2\linewidth, trim=2.8cm 8cm 0cm 8cm, clip]{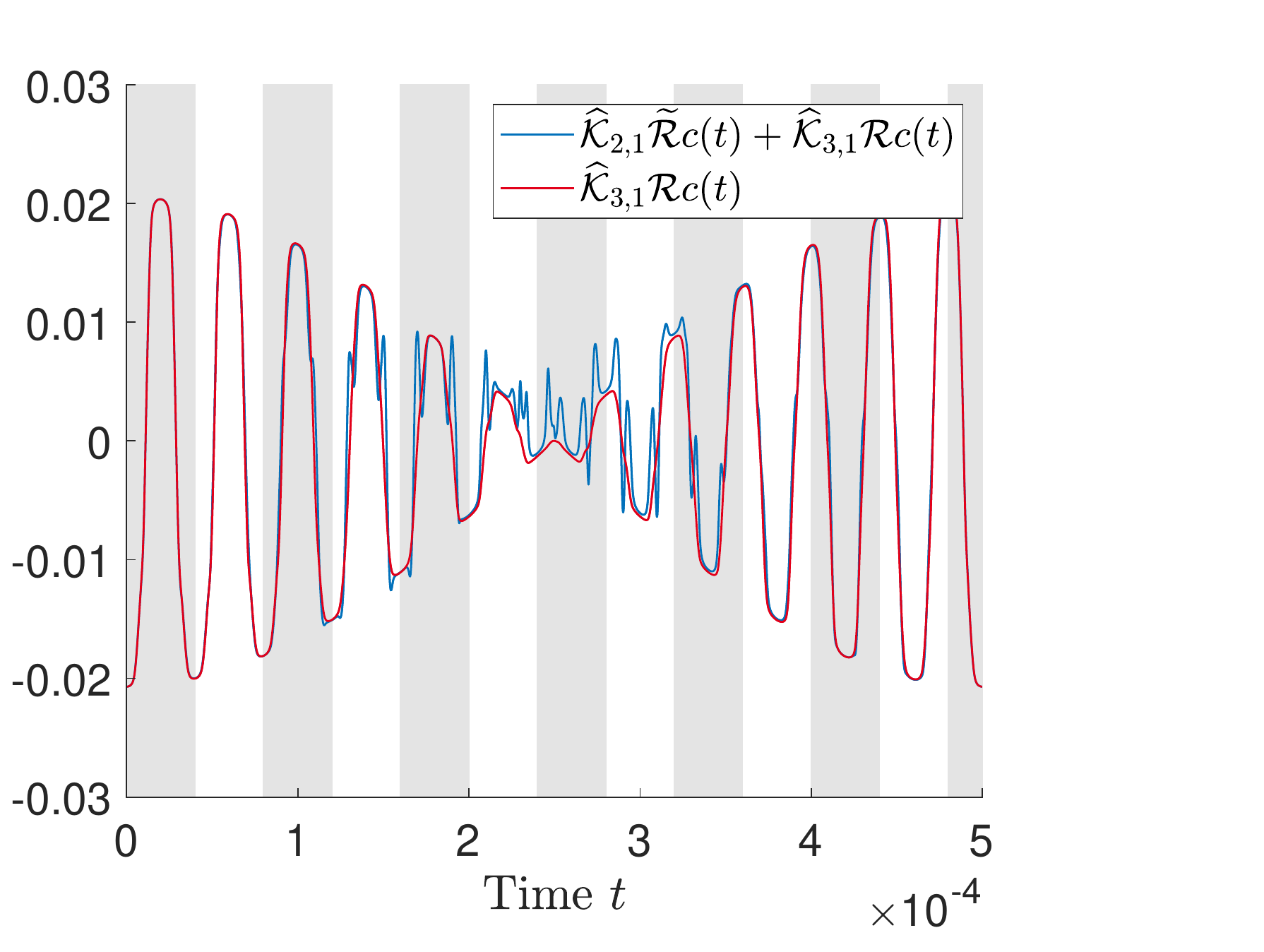}
		\subcaption{}
		\label{Fig:IntegralEstimates_3}
	\end{subfigure}
	\caption{Plots showing $\widehat{\mathcal{K}}_{1,1} \mathcal{R}c$ (blue) in comparison to $\widehat{\mathcal{K}}_{3,1} \mathcal{R}c$ (red) together with the bound determined in Lemma~\ref{Lemma:Int3} (dark grey) (a) and $\widehat{\mathcal{K}}_{2,1} \widetilde{\mathcal{R}}c + \widehat{\mathcal{K}}_{3,1} \mathcal{R}c$ (blue) in comparison to $\widehat{\mathcal{K}}_{3,1} \mathcal{R}c$ (red) (b) for the phantom in Figure~\ref{Fig:Phantom}. Blocks in the background of the two plots demonstrate periods of the drive field. The total measurement time is $ 12.5\frac{1}{f_{\text{d}}}=\frac{1}{2f_{\text{rot}}}$.}
	\label{Fig:IntegralEstimates}
\end{figure}\\
Figure~\ref{Fig:Phantom} shows the phantom which we want to reconstruct. In order to evaluate the discretized version of~\eqref{Eqn:NewOp} for a specific time point, the corresponding column in a sinogram filled angle by angle is needed. Thus, to be able to compute data for every sampling point, a sinogram containing a column for each angle the FFL attains during scanning is needed. For sequential line rotation this amounts to the sinogram shown in Figure~\ref{Fig:Sino_discr}, for simultaneous line rotation to the one in Figure~\ref{Fig:Sino_cont_1}. In order to reduce the problem size, we rather aim at reconstructing only the sinogram taking values along the dashed line in Figure~\ref{Fig:Sino_cont_1} resulting in Figure~\ref{Fig:Sino_cont}. Thereby, we get an additional error but because of the shape of the convolution kernel of the main part $\widehat{\mathcal{K}}_{1,l} \mathcal{R}c$, which converges to the dirac delta for particle diameters tending to infinity~\cite{knopp2012magnetic}, and using a regularization method for reconstruction this should be no further problem. 
\begin{figure}[htbp]
	\begin{subfigure}[b]{0.5\textwidth}
		\centering
		\includegraphics[width=0.9\linewidth, trim=0 8.5cm 0 8cm, clip]{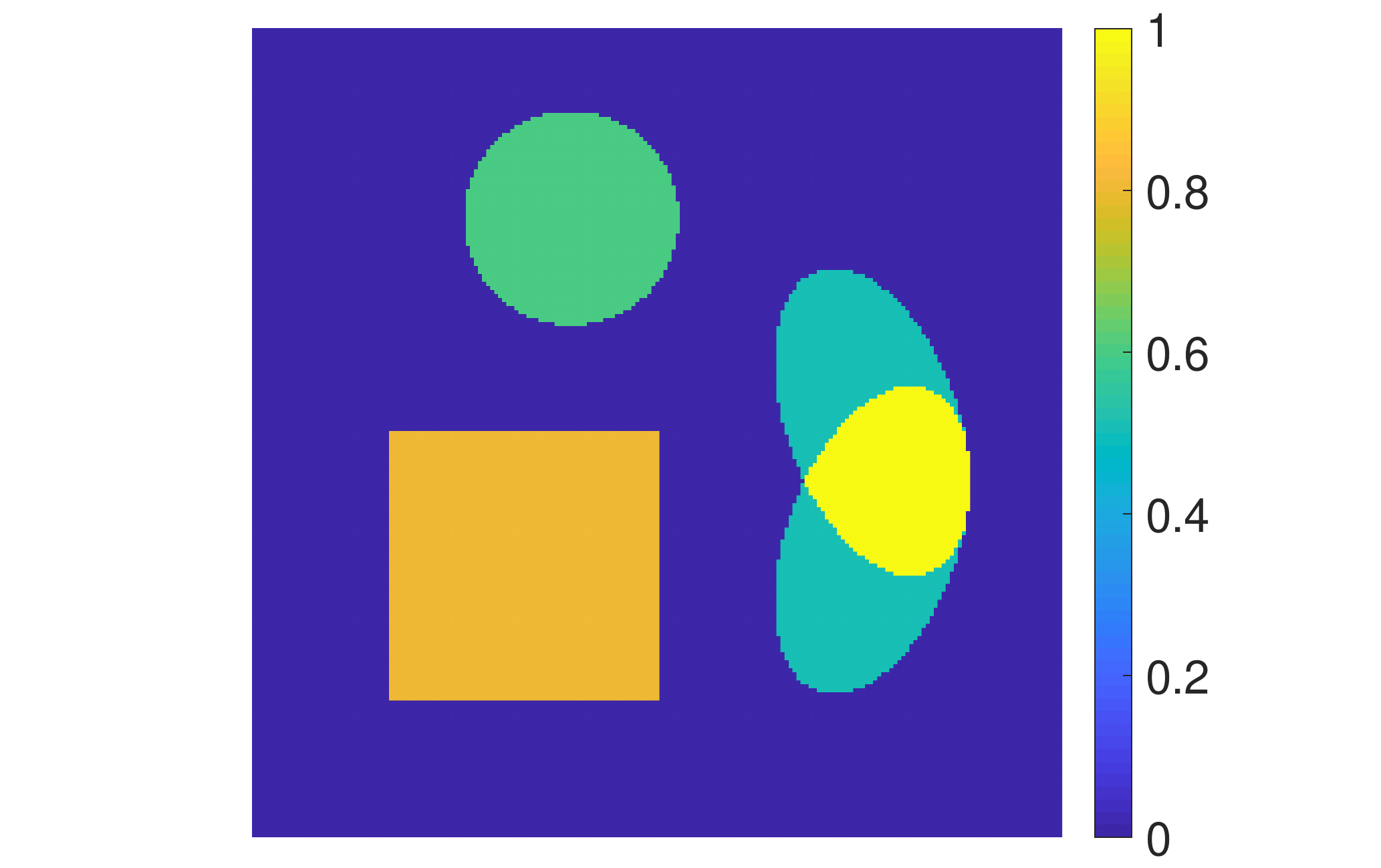}
		\subcaption{}
		\label{Fig:Phantom}
	\end{subfigure}
	\hfill
	\begin{subfigure}[b]{0.5\textwidth}
		\centering
		\includegraphics[width=1.\linewidth, trim=0.5cm 9.2cm 0cm 8cm, clip]{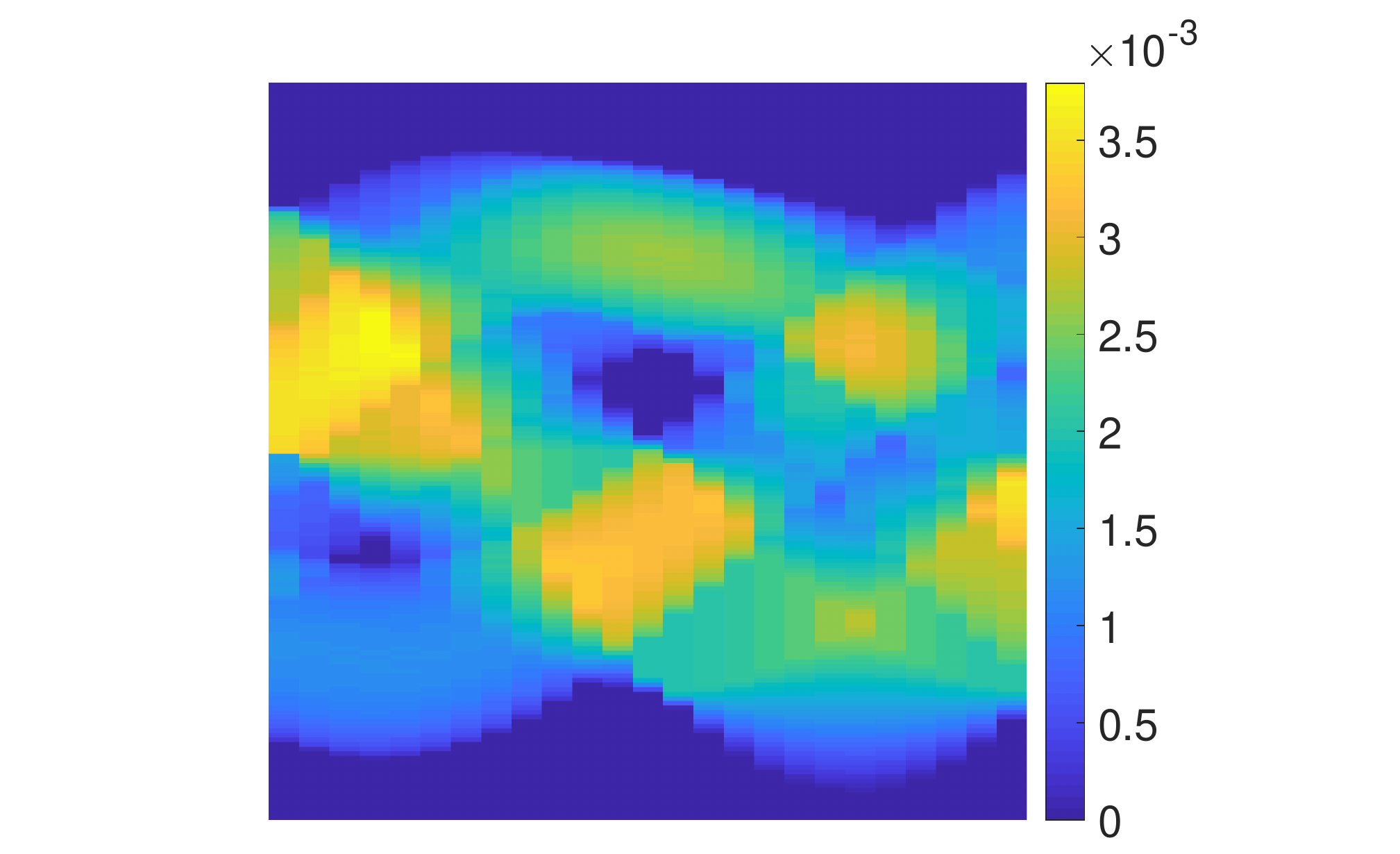}
		\subcaption{}
		\label{Fig:Sino_discr}
	\end{subfigure}
	\caption{Phantom (a) and corresponding sinogram filled angle by angle (b).}
\end{figure}
\begin{figure}[htbp]
	\begin{subfigure}[b]{0.5\textwidth}
		\centering
		\includegraphics[width=0.975\linewidth, trim=3.5cm 11cm 2cm 8cm, clip]{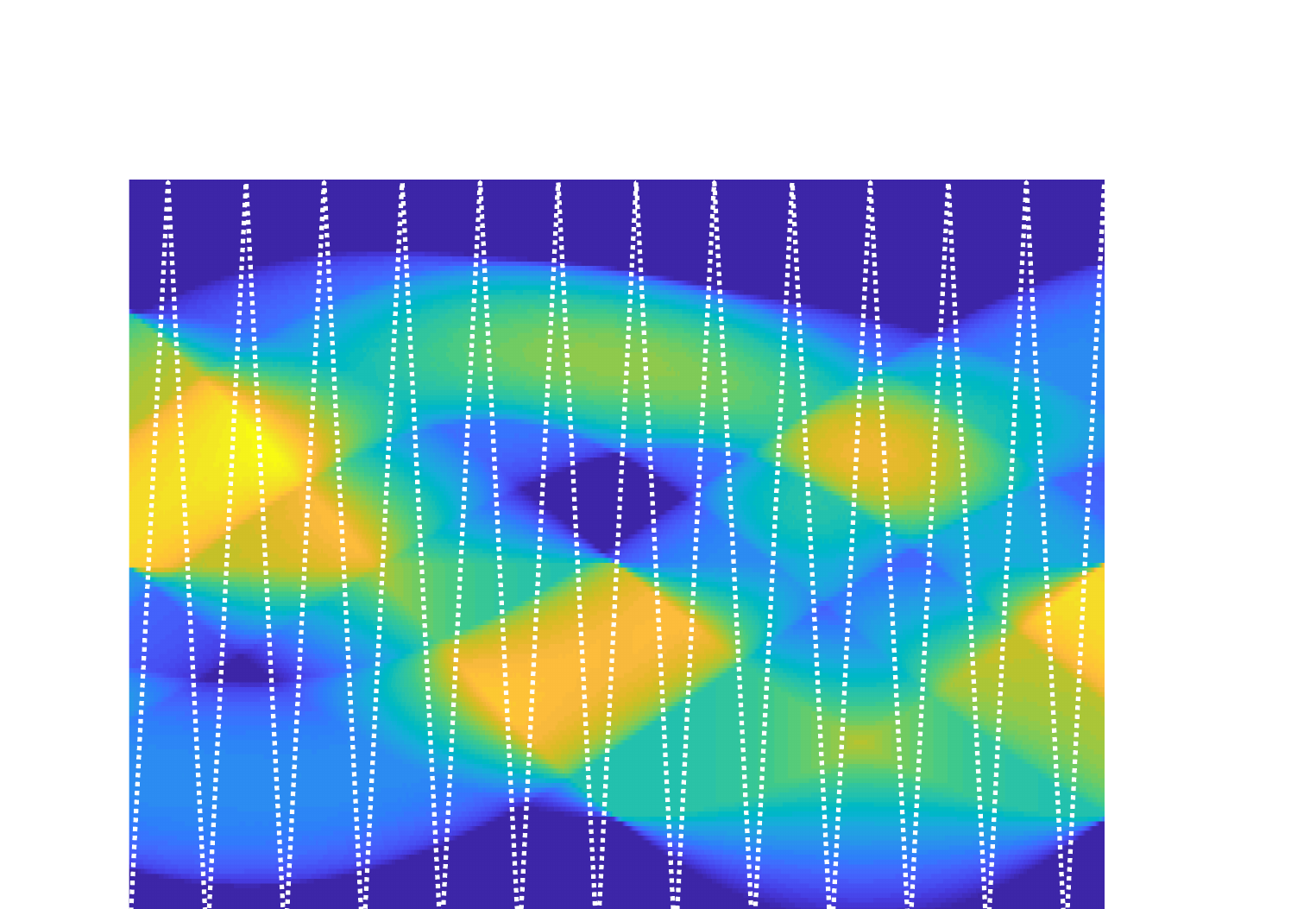}
		\subcaption{}
		\label{Fig:Sino_cont_1}
	\end{subfigure}
	\hfill
	\begin{subfigure}[b]{0.5\textwidth}
		\centering
		\includegraphics[width=0.8\linewidth, trim=3.5cm 10cm 2cm 8cm, clip]{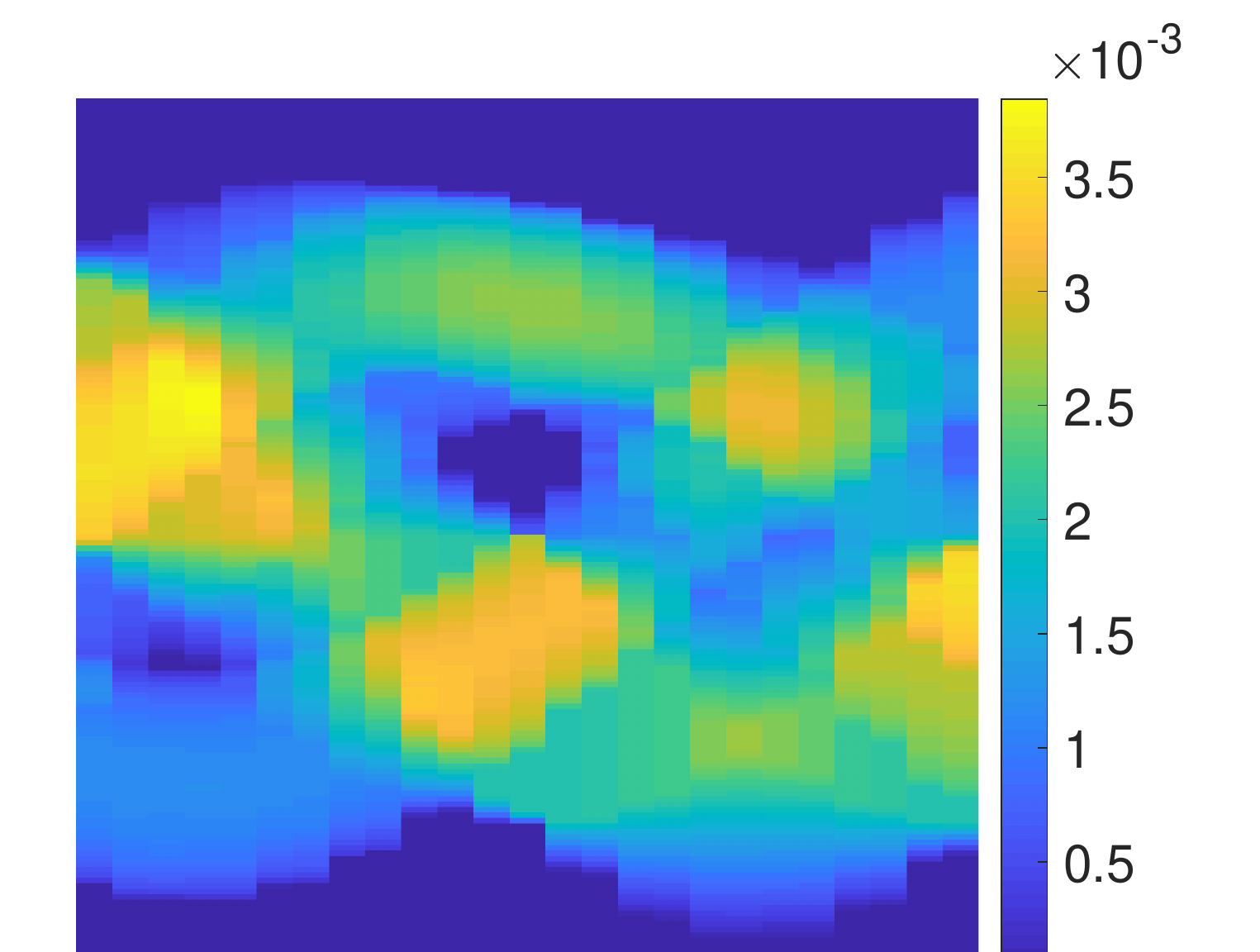}
		\subcaption{}
		\label{Fig:Sino_cont}
	\end{subfigure}
	\caption{(a) Sinogram for the phantom shown in Figure~\ref{Fig:Phantom} with columns for each angle attained by the FFL during scanning. (b) Adapted version filled following the dashed line in (a). }
\end{figure}\\
We discretize~\eqref{Eqn:OptProblem} by applying standard methods. We define discretized versions of the Radon transform $\mathbf{R}_{\text{seq}}$ and $\mathbf{R}_{\text{sim}}$ such that $\mathbf{R}_{\text{seq}}\mathbf{c}$ yields the sinogram in Figure~\ref{Fig:Sino_discr} and $\mathbf{R}_{\text{sim}}\mathbf{c}$ the one in Figure~\ref{Fig:Sino_cont}. Further, we scale data $\mathbf{u}_l$ and discretized forward operators $\mathbf{K}_{i,l}$ by dividing by the maximum absolute data value $u^*$
\begin{equation*}
\widehat{\mathbf{u}}_l := \frac{\mathbf{u}_l}{u^*}, \quad \widehat{\mathbf{K}}_{i,l} := \frac{\mathbf{K}_{i,l}}{u^*}.
\end{equation*}
\begin{table}[htbp]
	\centering
	\caption{Reconstruction methods}
	\renewcommand{\arraystretch}{1.3}
	\begin{tabular}{|c|c|}
		\hline
		\textbf{Method} & \textbf{Regarded minimization problem} \\
		\hline \hline
		& \\[-1.2em]
		$\mathcal{M}_1$ & $ \displaystyle
		\min_{\mathbf{c}\geq 0, \mathbf{v}\geq 0} \frac{1}{2}\sum_{l=1}^2 \norm{\widehat{\mathbf{K}}_{1,l}\mathbf{v} - \widehat{\mathbf{u}}_l}^2_{2} + \frac{\alpha_1}{2}\norm{\mathbf{R}_{\text{seq}}\mathbf{c} - \mathbf{v}}^2_{2} + \alpha_2 \norm{\abs{\nabla \mathbf{c}}_2}_1 $  \\
		\hline 
		$\mathcal{M}_2$ & $ \displaystyle
		\min_{\mathbf{c}\geq 0, \mathbf{v}\geq 0} \frac{1}{2}\sum_{l=1}^2 \norm{\widehat{\mathbf{K}}_{1,l}\mathbf{v} - \widehat{\mathbf{u}}_l}^2_{2} + \frac{\alpha_1}{2}\norm{\mathbf{R}_{\text{sim}}\mathbf{c} - \mathbf{v}}^2_{2} + \alpha_2 \norm{\abs{\nabla \mathbf{c}}_2}_1 $   \\
		\hline
		$\mathcal{M}_3$ & $ \displaystyle
		\min_{\mathbf{c}\geq 0, \mathbf{v}\geq 0} \frac{1}{2}\sum_{l=1}^2 \norm{\brackets{\widehat{\mathbf{K}}_{1,l}+\widehat{\mathbf{K}}_{3,l}}\mathbf{v} - \widehat{\mathbf{u}}_l}^2_{2} + \frac{\alpha_1}{2}\norm{\mathbf{R}_{\text{sim}}\mathbf{c} - \mathbf{v}}^2_{2} + \alpha_2 \norm{\abs{\nabla \mathbf{c}}_2}_1 $   \\
		\hline
	\end{tabular}
	\label{table:RecoMethods}
	\renewcommand{\arraystretch}{1}
\end{table} \\
For reconstruction we regard the versions of~\eqref{Eqn:OptProblem} specified in Table~\ref{table:RecoMethods}. To solve the resulting problem, we use CVX, a package for specifying and solving convex programs~(\cite{cvx},~\cite{gb08}), together with the MOSEK solver~\cite{mosek2010mosek}. For our weighting parameters we choose $\alpha_1 \in \left\lbrace 1 , 2, 4 \right\rbrace\cdot10^4$ and $\alpha_2 \in \left\lbrace 0.1^{5.5-i0.05},\; i=0,\dots,49 \right\rbrace$ such that $\brackets{\alpha_1,\alpha_2}$ maximizes the structural similarity (SSIM) of the reconstructed particle concentration with the groundtruth. However, there might be better parameter choices. Note that $s_t$, i.e. the distance of the FFL to the origin, is not sampled equidistantly due to the choice of the excitation function~\eqref{Eqn:Excitation} and because we sample for equidistant time points~(cf.~\cite{knopp2011fourier}). While in~\cite{knopp2011fourier} they need to process and regrid the signal to be able to use Wiener deconvolution, this is not necessary for our methods.   \\\\
We start with a result for sequential line rotation in Figure~\ref{Fig:Reco_discr_discr}. We obtain good reconstructions for both, the phantom as well as the Radon data. Nevertheless, the vertices of the square are not resolved a hundred percent properly. This is likely due to missing information in the data and because we used an isotropic version of the TV penalty term, which favors rounded edges (cf.~\cite{burger2013guide}). 
\begin{figure}[htbp]
	\begin{subfigure}[b]{0.5\textwidth}
		\centering
		\includegraphics[width=1\linewidth, trim=0 9cm 0 8cm, clip]{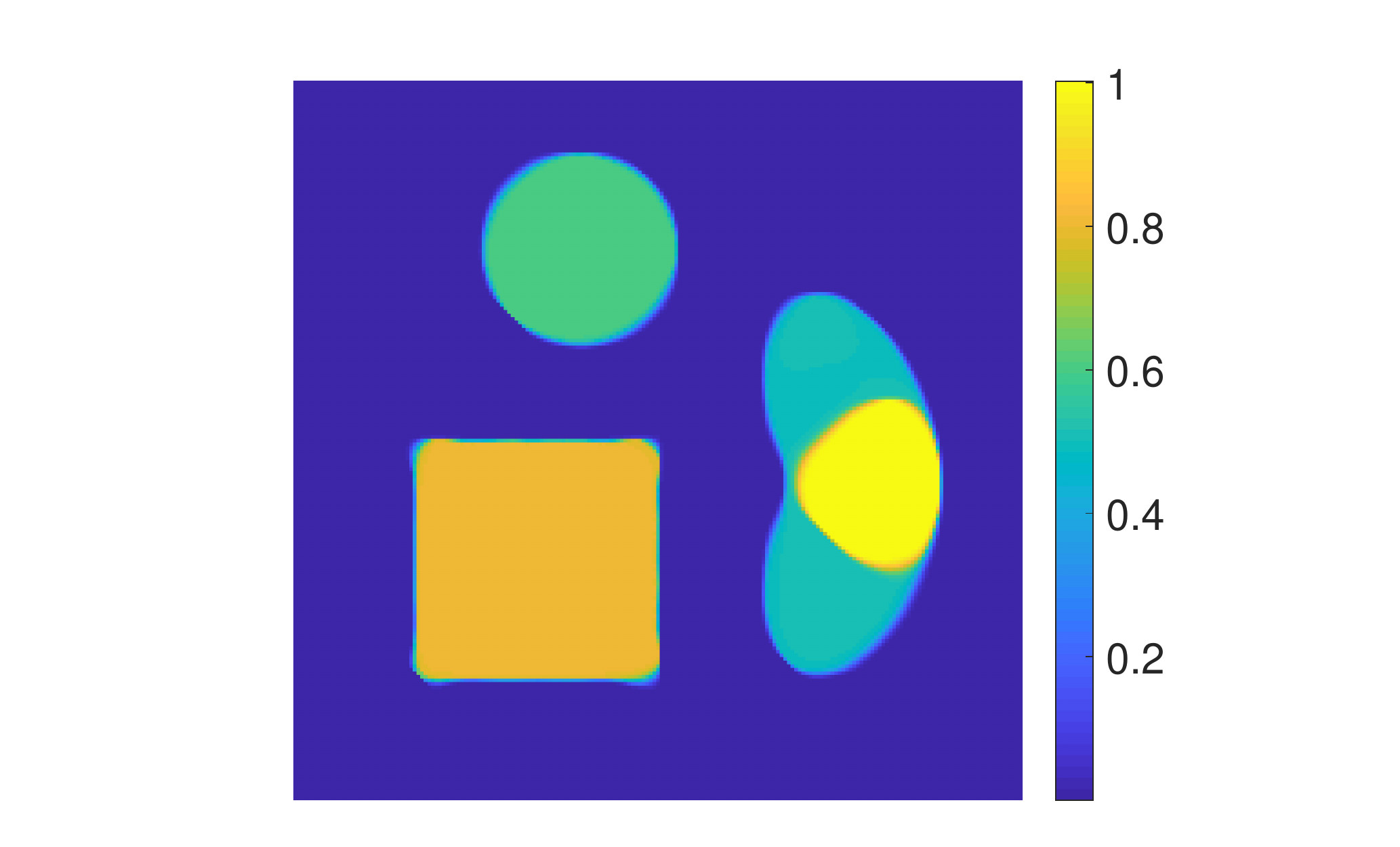}
	\end{subfigure}
	\hfill
	\begin{subfigure}[b]{0.5\textwidth}
		\centering
		\includegraphics[width=1\linewidth, trim=0cm 9cm 0cm 8cm, clip]{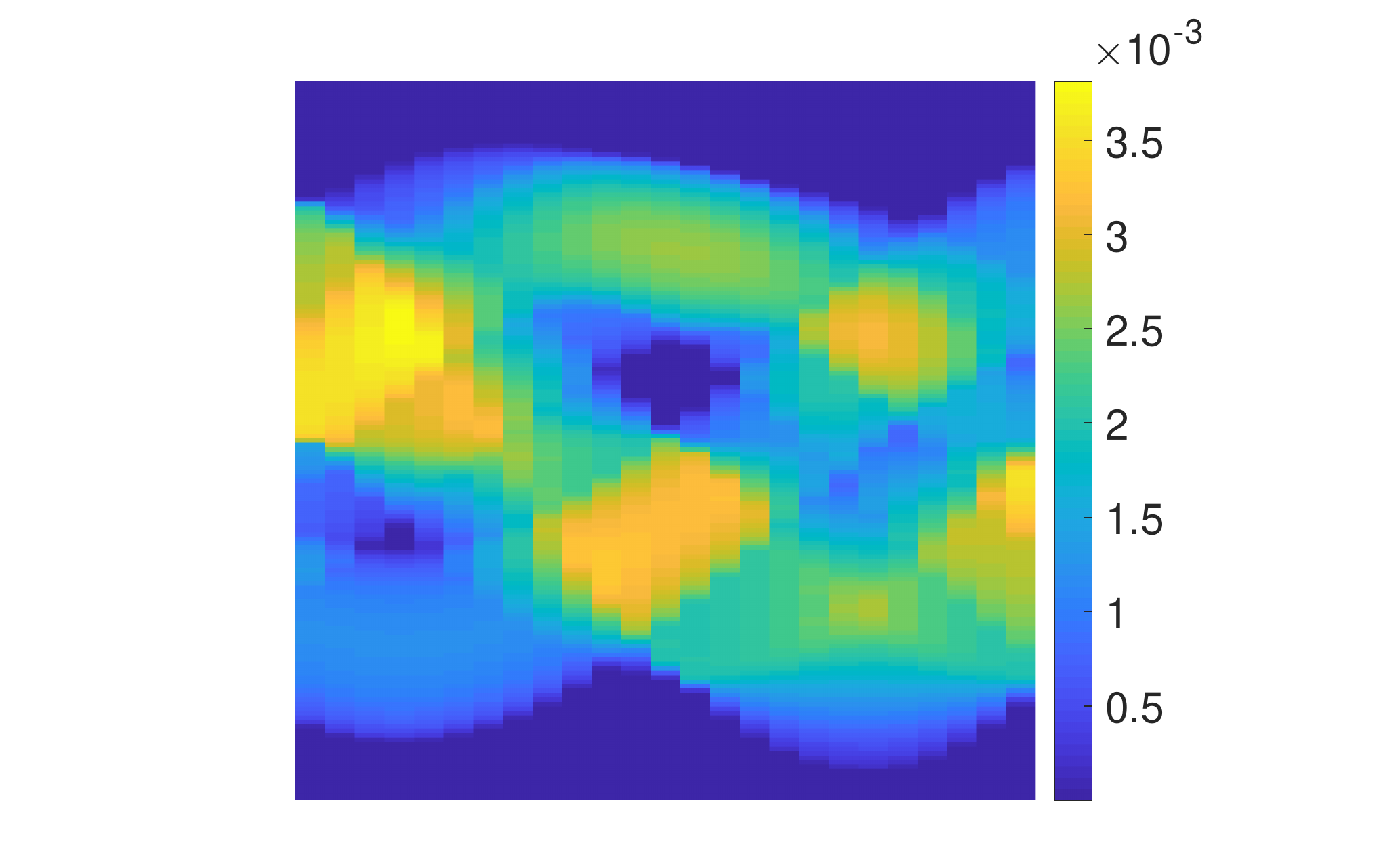}
	\end{subfigure}
	\caption{Reconstruction of phantom and sinogram for sequential line rotation via $\mathcal{M}_1.$ ($\alpha_1=2\cdot 10^4,\; \alpha_2=0.1^{5.1},\; \text{SSIM}\brackets{\mathbf{c}} = 0.9492$)}
	\label{Fig:Reco_discr_discr}
\end{figure}
Next, we regard results for the setting of simultaneous line rotation. Similar to~\cite{bringout2020new}, we get a slightly rotated version of the phantom when ignoring the different sampling pattern, see Figure~\ref{Fig:Reco_Phantom_discr_cont}.
\begin{figure}[htbp]
	\centering
	\includegraphics[width=0.45\linewidth, trim=0 8.25cm 0 8cm, clip]{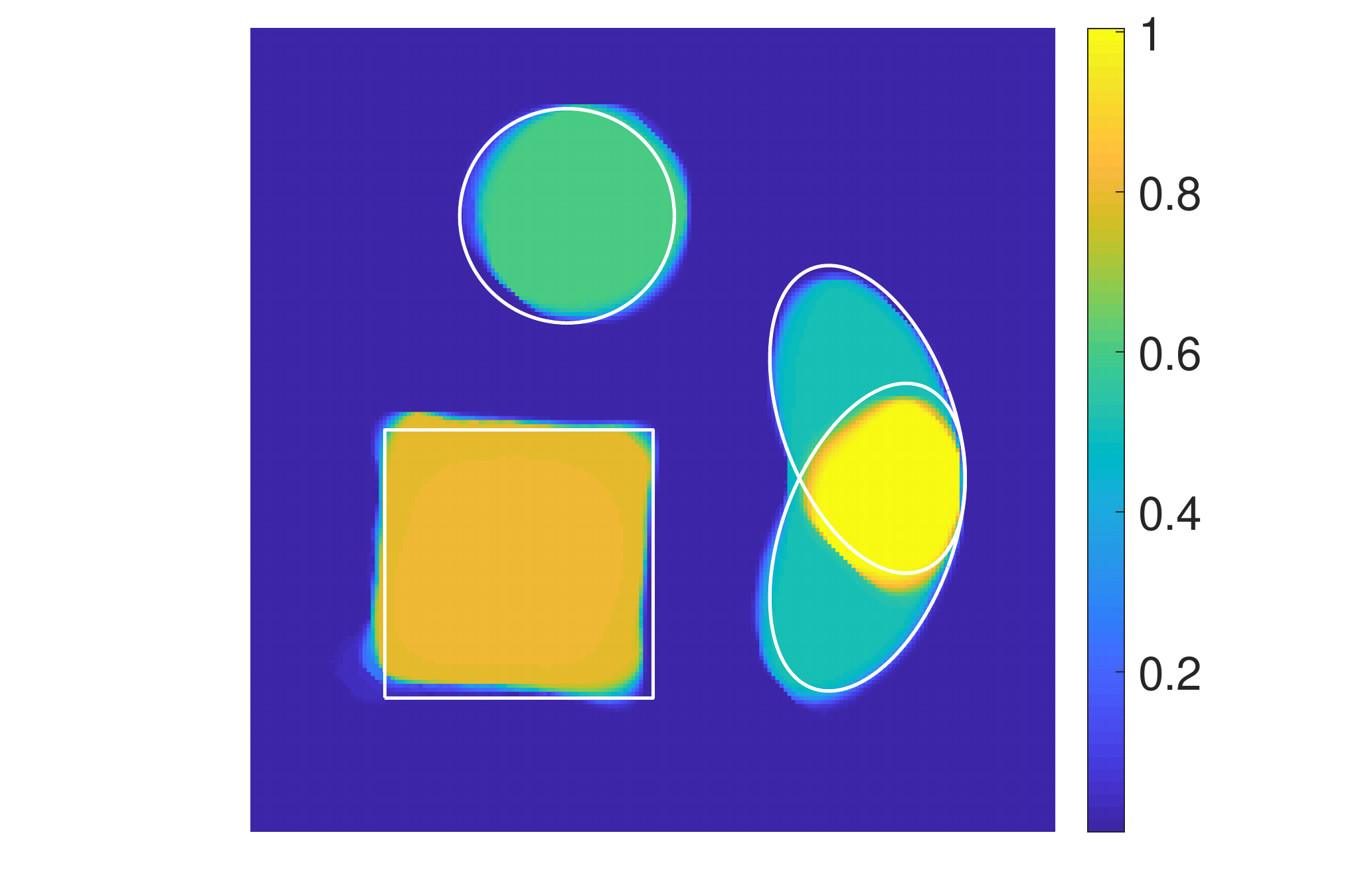}
	\caption{Phantom reconstruction for simultaneous line rotation using model  $\mathcal{M}_1$, i.e. ignoring the specific sampling pattern for simultaneous line rotation. Contours of the groundtruth are depicted in white. ($\alpha_1=4\cdot 10^4,\; \alpha_2=0.1^{4},\; \text{SSIM}\brackets{\mathbf{c}} = 0.7976$)}
	\label{Fig:Reco_Phantom_discr_cont}
\end{figure}\\
Lastly, we compare reconstructions incorporating $\mathcal{K}_{3,l} \mathcal{R}$ with those neglecting this term. According to Figure~\ref{Fig:IntegralEstimates_3} we expect only slight differences. Corresponding deliverables with and without added gaussian noise with $2\cdot 10^{-10}$ standard deviation (ca. 0.79\% of $u^*$) can be found in Figure~\ref{Fig:Recos}. Our reconstructions show that incorporation of the third term in the model yields slightly higher values for structural similarity and for the noiseless case, the quantitative concentration values fit better to the groundtruth. For noisy data the size of $\mathcal{K}_{3,l} \mathcal{R}$ gets close to the range of added noise and thus, it is reasonable that the gain in the SSIM value is lower compared to the noise-free setting.
\begin{figure}[htb]
	\centering
	\begin{subfigure}[b]{\textwidth}
		\centering
		\includegraphics[width=0.425\linewidth, trim=2cm 9cm 0 8cm, clip]{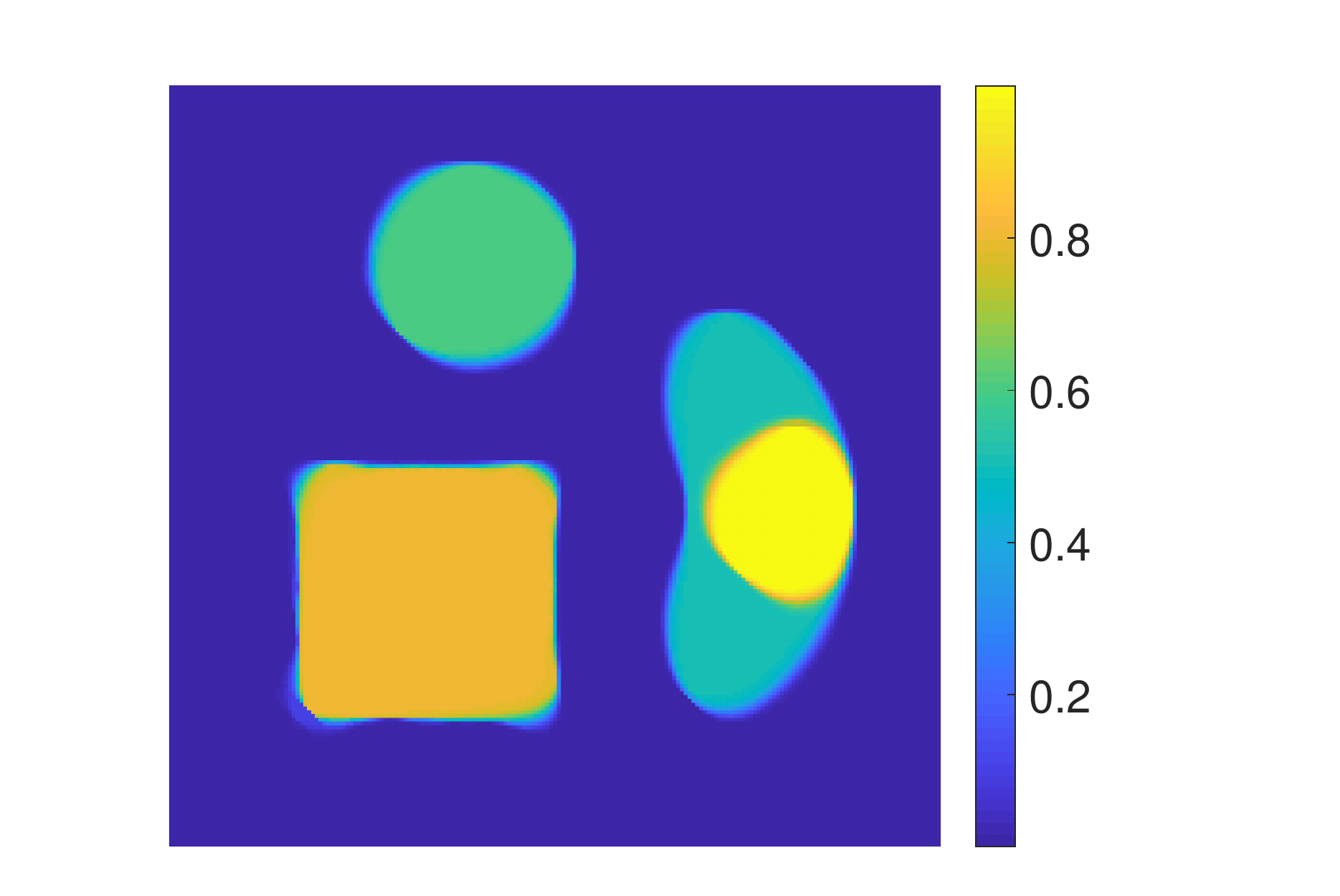}%
		\hfill
		\includegraphics[width=0.425\linewidth, trim=2cm 9cm 0cm 8cm, clip]{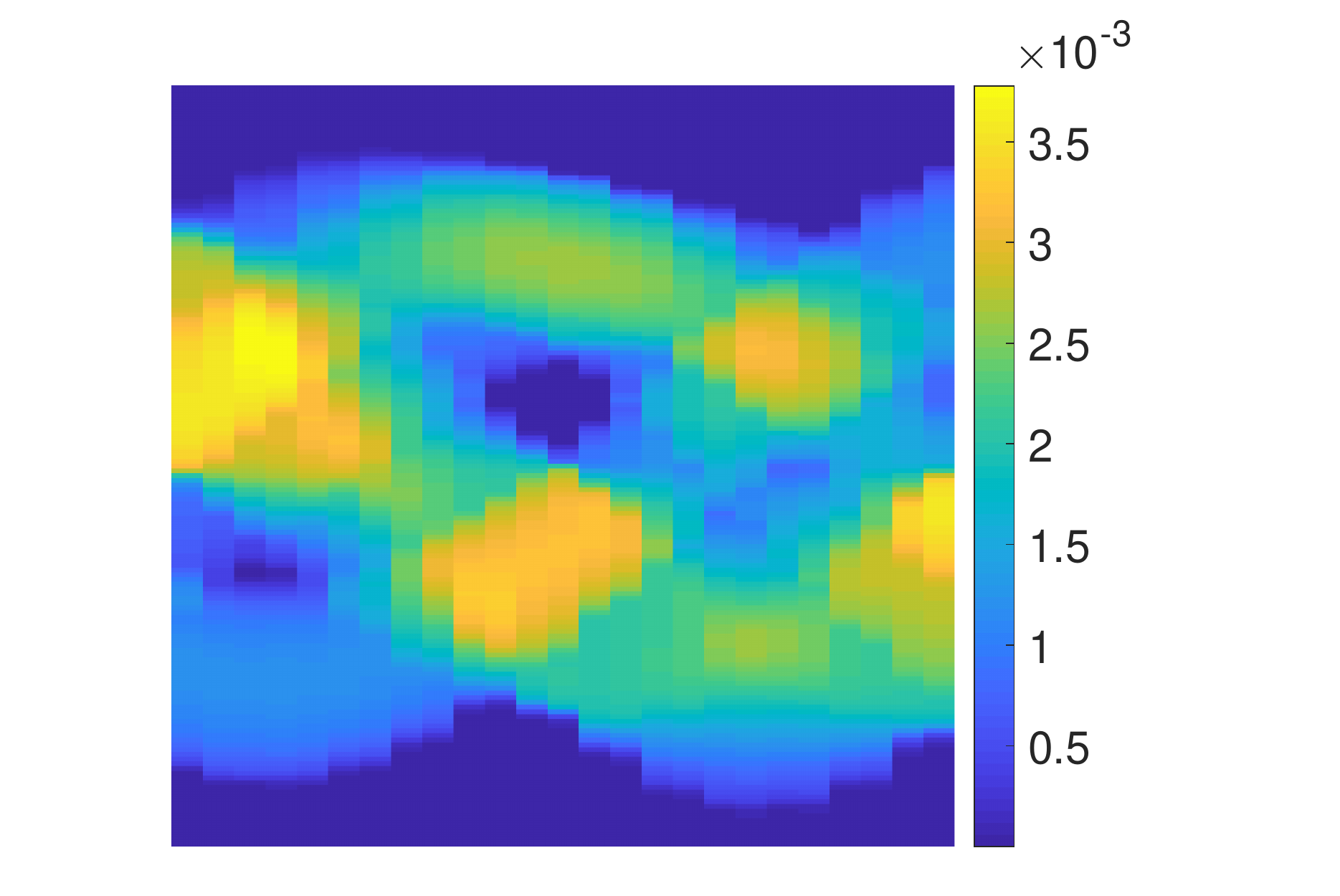}
		\caption{Reconstruction with respect to $\mathcal{M}_2.$ ($\alpha_1=4\cdot 10^4,\; \alpha_2=0.1^{3.75},\; \text{SSIM}\brackets{\mathbf{c}} = 0.9014$)}
	\end{subfigure}
	\vskip\baselineskip
	\begin{subfigure}[b]{\textwidth}
		\centering
		\includegraphics[width=0.425\linewidth, trim=2cm 9cm 0 8cm, clip]{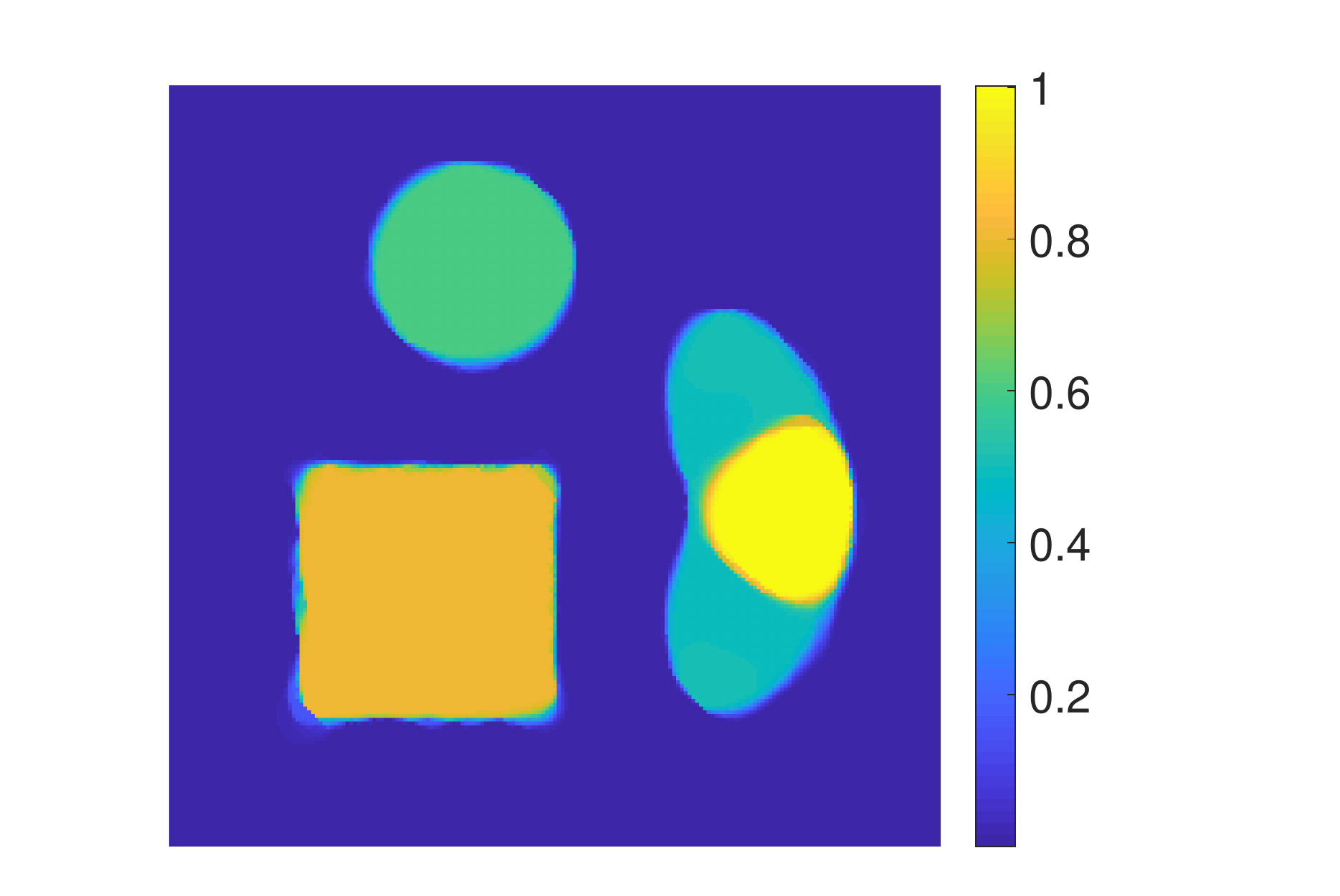}%
		\hfill
		\includegraphics[width=0.425\linewidth, trim=2cm 9cm 0cm 8cm, clip]{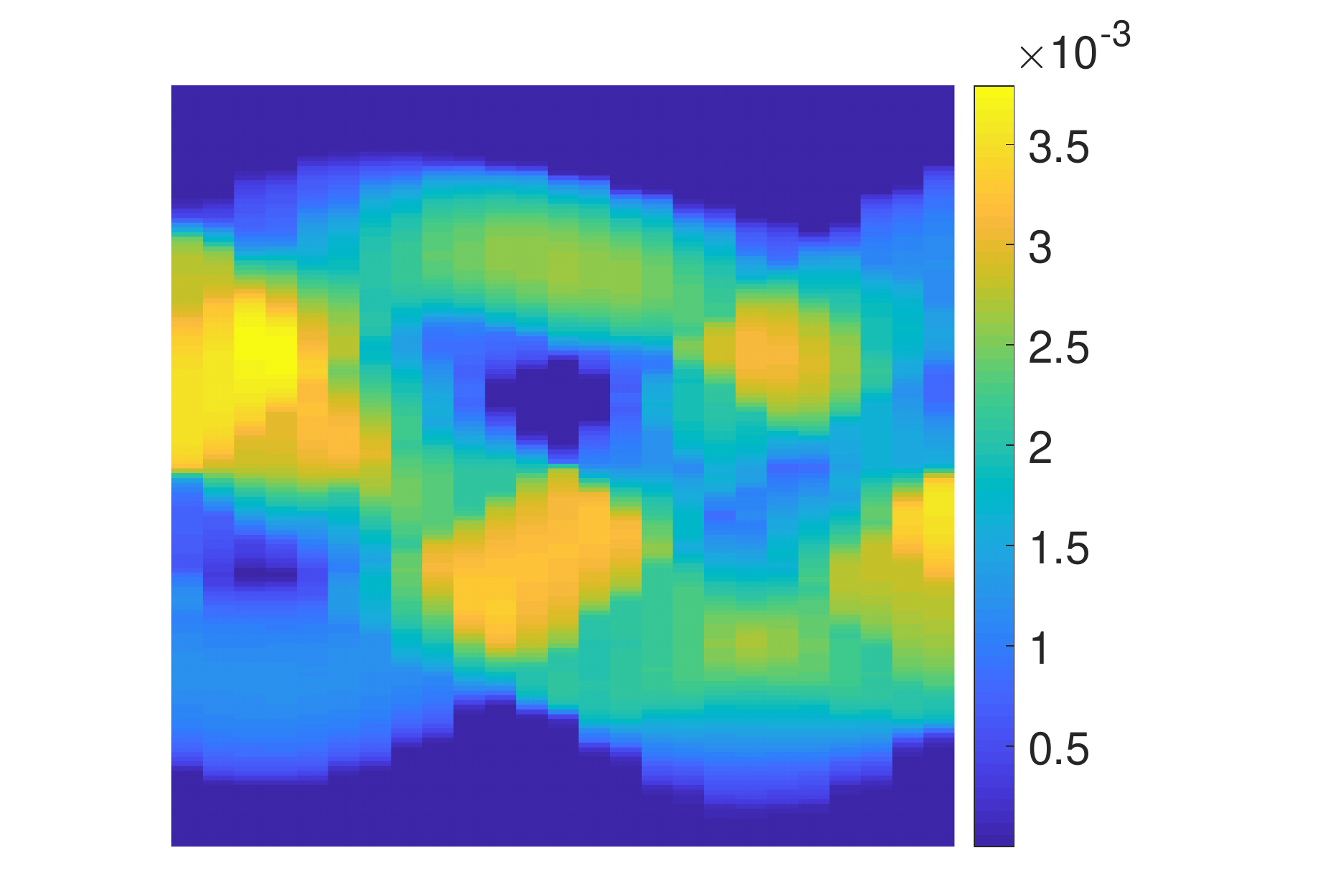}
		\caption{Reconstruction with respect to $\mathcal{M}_3.$ ($\alpha_1=1\cdot 10^4,\; \alpha_2=0.1^{4.7},\; \text{SSIM}\brackets{\mathbf{c}} = 0.9302$)}
	\end{subfigure}
	\vskip\baselineskip
	\begin{subfigure}[b]{\textwidth}
		\centering
		\includegraphics[width=0.425\linewidth, trim=2cm 9cm 0 8cm, clip]{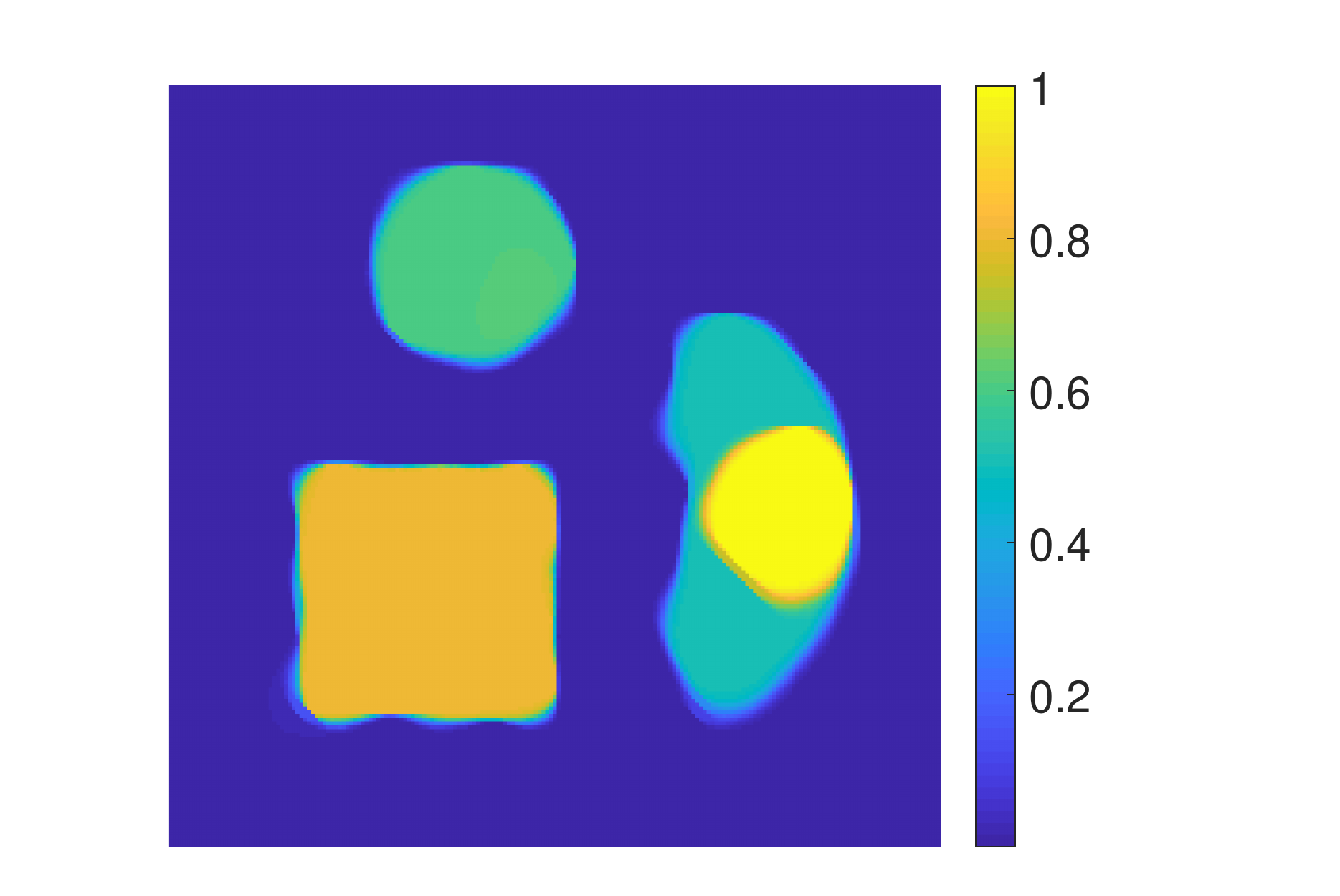}%
		\hfill
		\includegraphics[width=0.425\linewidth, trim=2cm 9cm 0cm 8cm, clip]{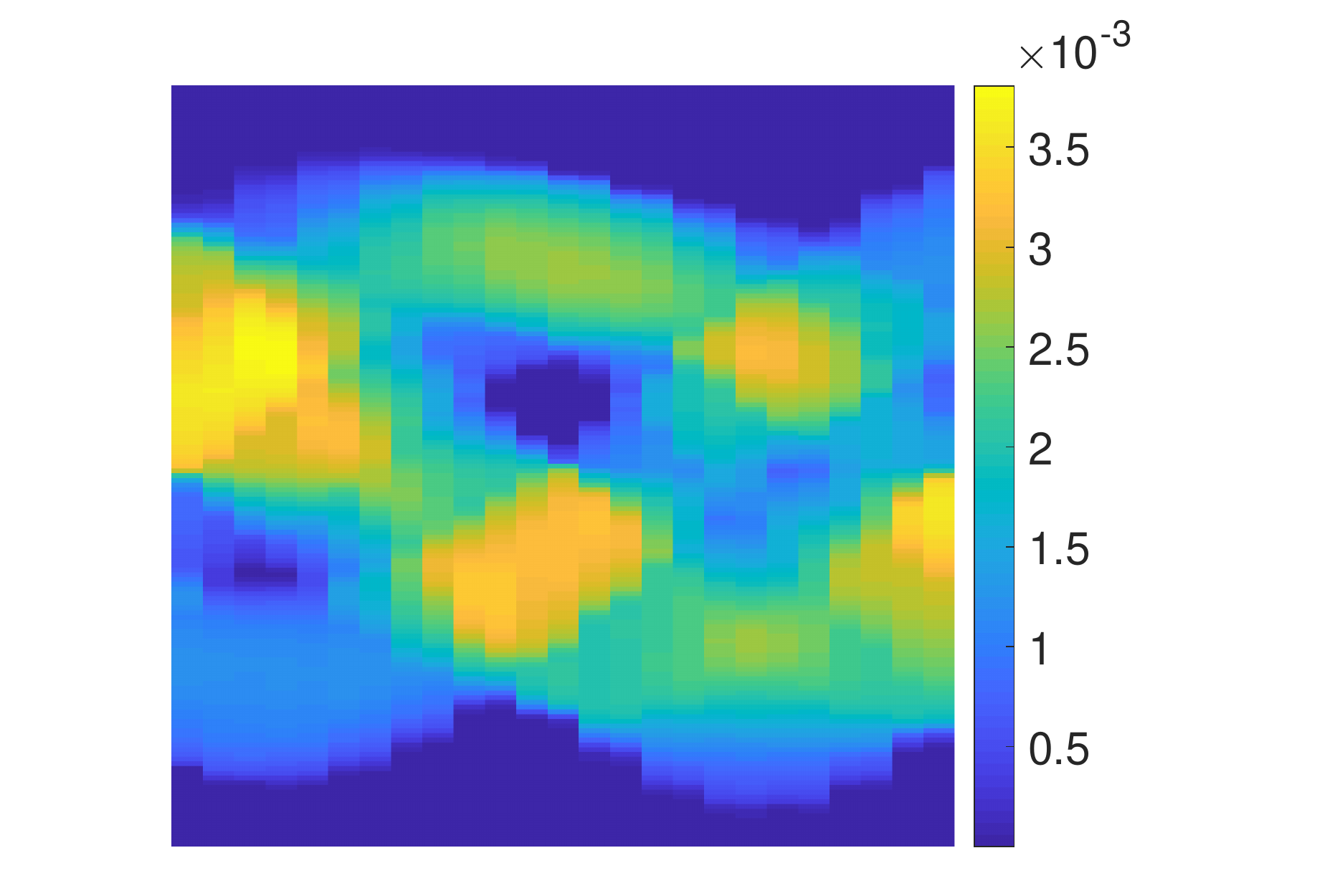}
		\caption{Reconstruction with respect to $\mathcal{M}_2$ and noisy data. ($\alpha_1=4\cdot 10^4,\; \alpha_2=0.1^{3.8},\; \text{SSIM}\brackets{\mathbf{c}} = 0.8689$)}
	\end{subfigure}
	\vskip\baselineskip
	\begin{subfigure}[b]{\textwidth}
		\centering
		\includegraphics[width=0.425\linewidth, trim=2cm 9cm 0 8cm, clip]{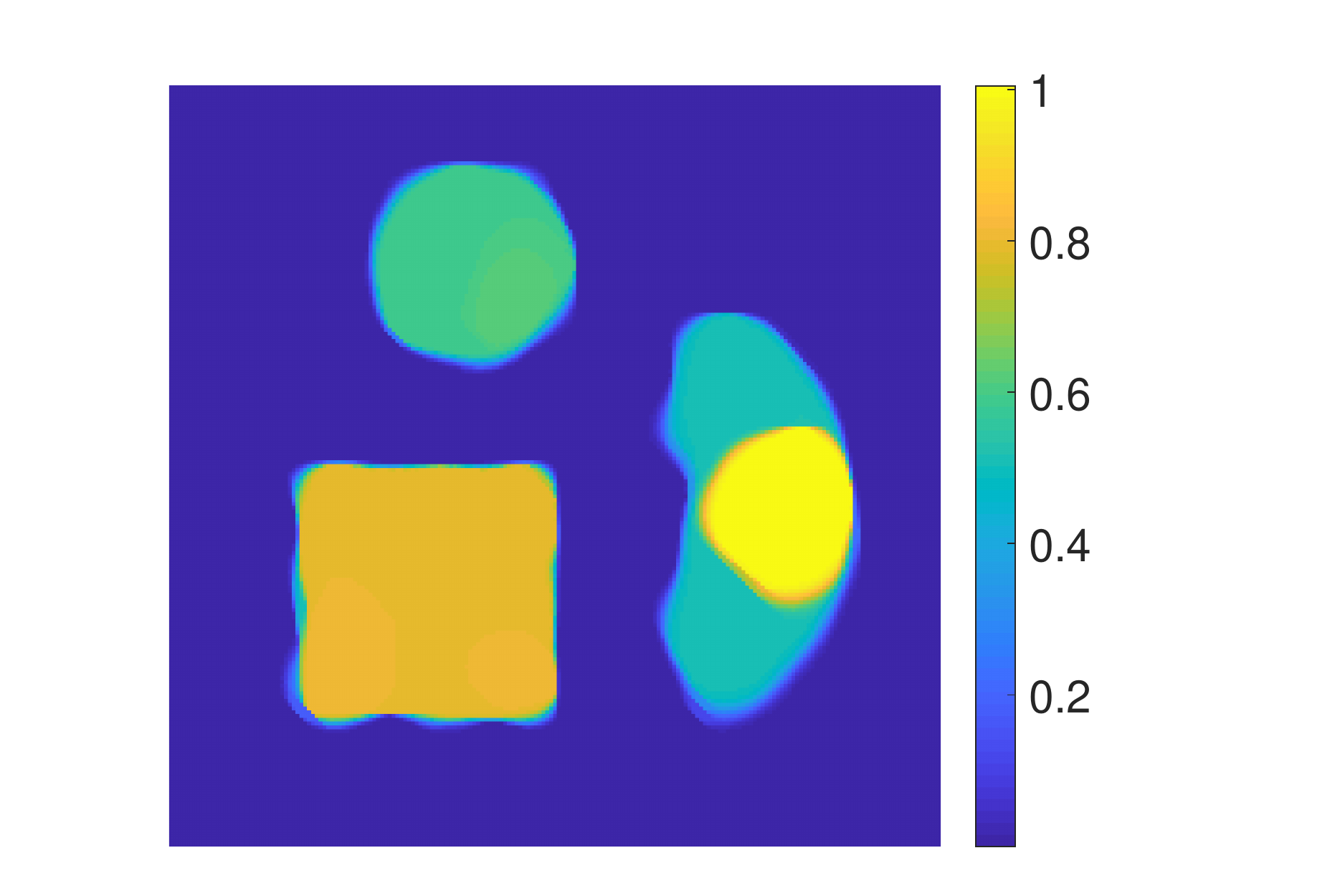}%
		\hfill
		\includegraphics[width=0.425\linewidth, trim=2cm 9cm 0cm 8cm, clip]{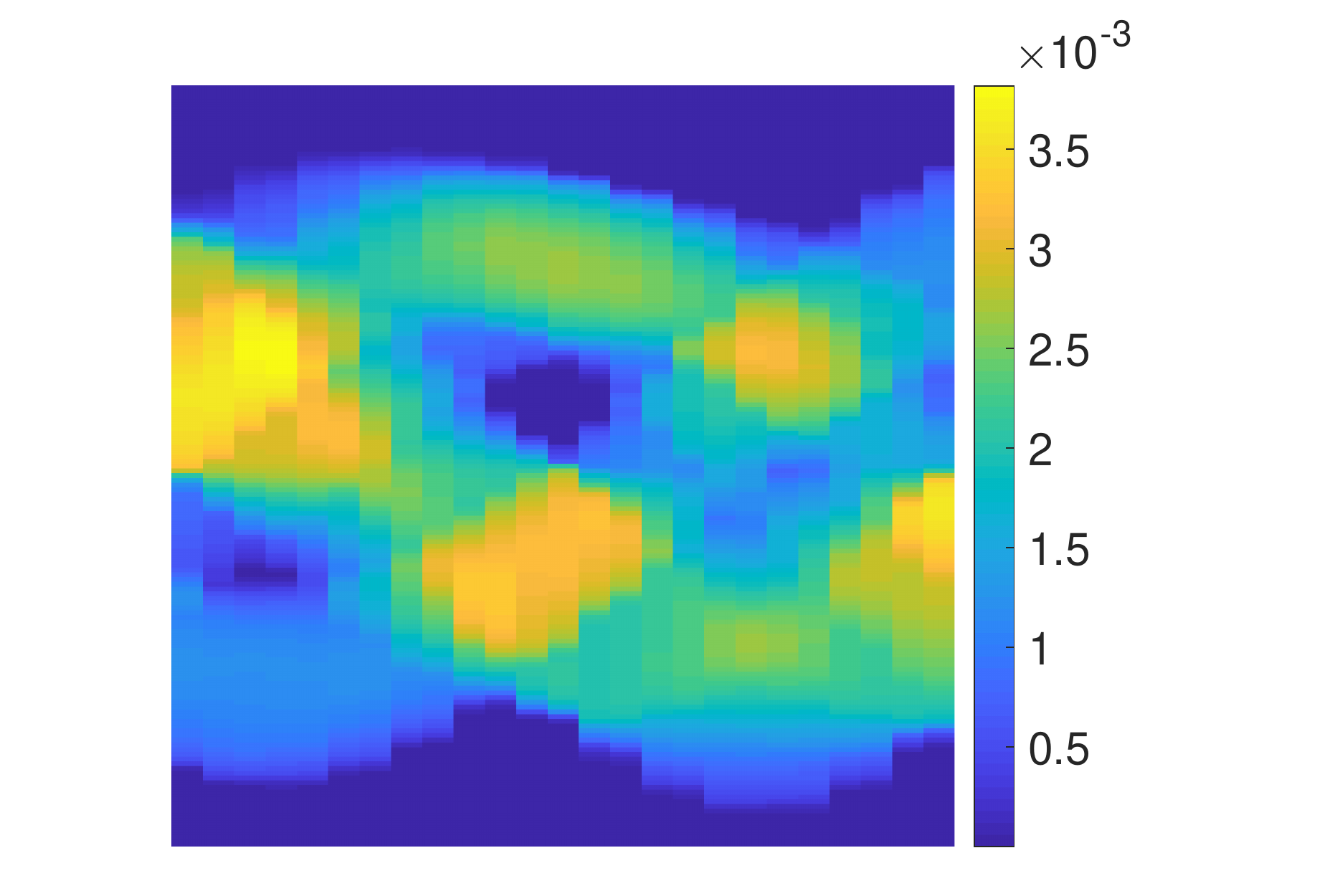}
		\caption{Reconstruction with respect to $\mathcal{M}_3$ and noisy data. ($\alpha_1=4\cdot 10^4,\; \alpha_2=0.1^{3.9},\; \text{SSIM}\brackets{\mathbf{c}} = 0.8826$)}
	\end{subfigure}
	\caption{Results for simultaneous line rotation using different reconstruction approaches.}
	\label{Fig:Recos}
\end{figure}

%%%%%%%%%%%%%%%%%%%%%%%%%%%%%%%%%%
\section{Conclusion}
%%%%%%%%%%%%%%%%%%%%%%%%%%%%%%%%%%

In this work, we regarded the relation of MPI and Radon data for a continuously rotating FFL. Because of the time derivative in the signal equation and the additional time-dependencies for this setting, we get two additive terms in the forward model. We derived bounds for these two terms supporting the assumption of \cite{bringout2020new} and~\cite{knopp2011fourier} that for sufficiently slow FFL rotation compared to the translation speed, the results from the sequential line rotation setting can be applied. This is also emphasized by our numerical results. Incorporation of the additional terms for image reconstruction does not seem to be necessary on a first view. However, including the third term in our reconstruction approach is simple because both terms together still form a convolution with the Radon data of the particle concentration. Further, its bound can be computed beforehand of reconstruction and thus, its magnitude can be compared to the total signal. In our case, incorporation yielded slightly better structural similarity values. Future research will comprise steps towards a more realistic set of assumptions, like time-dependent particle concentrations, magnetic field imperfections, and that the direct feedthrough of the excitation signal needs to be removed. Moreover, it would be interesting to extend the results of this article to different modeling approaches distinct from the Langevin model.

%%%%%%%%%%%%%%%%%%%%%%%%%%%%%%%%%%
\section*{Acknowledgment}
The authors acknowledge the support by the Deutsche Forschungsgemeinschaft (DFG) within the Research Training Group GRK 2583 "Modeling, Simulation and Optimization of Fluid Dynamic Applications".
%%%%%%%%%%%%%%%%%%%%%%%%%%%%%%%%%%

%%%%%%%%%%%%%%%%%%%%%%%%%%%%%%%%%%
% References
%%%%%%%%%%%%%%%%%%%%%%%%%%%%%%%%%%

\bibliographystyle{plain}
\bibliography{literature}

\end{document}